\def\titlerunning#1{\gdef\titrun{#1}}
\def\author#1{\gdef\autrun{\def\and{\unskip, }#1}\gdef\@author{#1}}
\def\MSC#1{{\renewcommand{\thefootnote}{}%
\footnote{\emph{Mathematics Subject Classification (2020):} #1}}}
\def\keywords#1{\par\medskip
\noindent\textbf{Keywords:} #1}
\newtheorem{theorem}{Theorem}[section]
\newtheorem{lemma}[theorem]{Lemma}
\newtheorem{proposition}[theorem]{Proposition}
\newtheorem{corollary}[theorem]{Corollary}
\theoremstyle{definition}
\newtheorem{remark}[theorem]{Remark}
\numberwithin{equation}{section}
\def\Aut{{\rm Aut}}
\def\PG{{\rm PG}}
\def\fq{\mathbb{F}_{q}}
\def\fqm{\mathbb{F}_{q^m}}
\def\F{\mathbb{F}}
\def\bx{\bar{x}}
\def\xqs{x^{\sigma}}
\def\xxs{x^{\sigma^2}}
\def\yqs{y^{\sigma}}
\def\yys{y^{\sigma^2}}
\def\yyys{y^{\sigma^3}}
\def\bxqs{\bar{x}^{\sigma}}
\def\bxxs{\bar{x}^{\sigma^2}}
\def\bxxxs{\bar{x}^{\sigma^3}}
\def\lambdas{\lambda^\sigma}
\def\lambdass{\lambda^{\sigma^2}}
\def\ba{\bar{a}}
\def\bb{\bar{b}}
\def\cC{\mathcal{C}}
\def\rk{\mathrm{rk}}
\def\GL{\mathrm{GL}}
\def\GaL{\Gamma \mathrm{L}}
\def \supp{\mathrm{supp}}
\def\cU{\mathcal{U}}
\def\cZ{\mathcal{Z}}
\def\cW{\mathcal{W}}
\newcommand{\C}{\mathcal{C}}
\newcommand{\mU}{\mathcal{U}}
\newcommand{\Fmkd}{[n,k,d]_{q^m/q}}
\DeclareMathOperator{\rowsp}{rowsp}
\newlength{\myl}
\let\expandafter\origequation\csname equation\endcsname
\let\expandafter\endorigequation\csname endequation\endcsname
\long\def\[#1\]{\begin{equation}#1\end{equation}}
      \resizebox{\linewidth}{!}{$\displaystyle\BODY$}% \myl > \linewidth
\NewDocumentEnvironment{env}{+b}
    {\begin{equation}\begin{split}#1\end{split}\end{equation}}
    {}
\begin{document}

%%%%% To ease editing, add:

\baselineskip=16pt

%%%%%%%%%%%%%%%%
%% In the running head, give an abbreviation of the title. 
\titlerunning{}

\title{Short rank-metric codes and scattered subspaces}
\author{ Stefano Lia \and Giovanni Longobardi \and Giuseppe Marino \and Rocco Trombetti}
%∗
\date{}

\maketitle

%\address{}

\MSC{94B05 · 51E20 · 94B27}

%%%%%%%%

\begin{abstract}

\noindent By exploiting the connection between scattered $\F_q$-subspaces of $\mathbb{F}_{q^m}^3$ and minimal non degenerate $3$-dimensional rank metric codes of $\mathbb{F}_{q^m}^{n}$, $n \geq m+2$, described in \cite{AlfaranoBorelloNeriRavagnani2022JCTA}, we will exhibit a new class of codes with parameters $[m+2,3,m-2]_{q^m/q}$ for infinite values of $q$ and $m \geq 5$ odd. Moreover, by studying the geometric structures of these scattered subspaces, we determine the rank weight distribution of the associated codes.
\keywords{Scattered subspaces, linear sets, cutting blocking sets, rank-metric codes}
\end{abstract}

\section{Introduction}

Codes in the rank metric setting ({\it rank-metric} codes) were originally introduced by Delsarte in the late $70$'s \cite{Delsarte1978bilinear}, and then resumed a few years later by Gabidulin in \cite{Gabidulin1985theory}. The study of these codes has intensified greatly in the last decades, mainly because of their application to network coding \cite{silva2008rank}, but also due to their mathematical structure which link them with many important algebraic and geometric structures in finite geoemtry, such as {\it linear sets} in projective spaces and their underlying $\F_q$-subspaces, \cite{Randrianarisoa2020geometric, SheekeyVoorde2020duality}. 

\medskip
\noindent
Let $m,n \in \mathbb{N}$ be two positive integers such that $m,n \geq 2$, and let $q$ be a prime power. Let $\F_{q^m}^n$ be the vector space of dimension $n$ over the Galois field $\F_{q^m}$. 
Let consider $v=(v_1,v_2,...,v_n) \in \F_{q^m}^n$, the \textit{rank weight} of $v$ is defined as
$$\omega_{\mathrm{rk}}(v)=\dim \langle v_1,v_2,\ldots,v_n\rangle_{\F_q}.$$
A rank-metric code ${\cal C} \subseteq \F_{q^m}^n$ of {\it length} $n$, is a subset of $V$ considered as a vector space endowed with the metric defined by the map $$d(v,w)= \omega_{rk}(v-w),$$
where $v$ and $w \in V.$ Elements of ${\cal C}$ are called {\it codewords}. For any codeword $v \in {\cal C}$ with  $\omega_{\mathrm{rk}}(v)=r$, by fixing an $\F_q$-basis $u=(u_1,\ldots,u_r)$ of $\langle v_1,\ldots,v_n\rangle_{\fq}$, it is easy to see that there exists a matrix $A\in \F_q^{r\times n}$ of maximum rank such that $v=uA$. Moreover, the {\it rank support} of $v$ is defined as
$$\supp_{\rk}(v):=\mathrm{rowsp}(A)\subseteq \F_q^n,$$
where $\mathrm{rowsp}(A)$ denotes the $\F_q$-span of the rows of $A$. By using standard linear algebra, it can be easily seen that the support of a codeword does not depend on the choice of the basis $u$.

The \textit{support} of $\cC$ is simply the span of the rank supports of all its codewords, i.e.
\begin{equation*}
\supp_{\rk}(\cC)=\langle \supp_{\rk}(v) \colon v \in \cC \rangle_{\F_q}.
\end{equation*}
The code $\cC$ is called \textit{(rank-)nondenegerate} if its support is equal to $\F^n_q$.

In this paper we will focus on rank-metric codes which are $\F_{q^m}$-subspaces of $\F_{q^m}^n$, the so-called  {\it $\mathbb{F}_{q^m}$-linear}, or simply {\it linear} rank-metric code. If ${\cal C} \subseteq \F_{q^m}^n$ is a linear rank-metric code, the minimum distance between two distinct codewords of ${\cal C}$ is $$d=d({\cal C})=\min \{\omega_{rk}(v) \, | \, v \in \cal C \setminus \{{\bf 0}\}\}.$$ 

\noindent A linear rank-metric code ${\cal C} \subseteq \F_{q^m}^n$ of length $n$, dimension $k$ and minimum distance $d$, is referred in the literature as to an $[n,k,d]_{q^m/q}$ code, or as to an $[n,k]_{q^m/q}$ code, depending on whether the minimum distance is known or not.  
These parameters are related by an inequality, which is known as the Singleton-like bound. Precisely, if $\cC$ is an $[n,k,d]_{q^m/q}$ code, then 
\begin{equation}\label{singleton-bound}
    mk \leq \min\{m(n - d + 1), n(m - d + 1)\}, 
\end{equation}
see \cite{Delsarte1978bilinear}.
Codes attaining this bound with equality are called \textit{maximum rank distance (MRD) codes}, and they are considered to be optimal, due to their largest possible error-correction capability.\\
A nonzero codeword $v\in\mathcal{C}$ is \textit{minimal} if for every $u \in \mathcal{C}$,
$$ \supp_{\rk}(u)\subseteq \supp_{\rk}(v)  \Longleftrightarrow u=\lambda v, \mbox{ for some } \lambda \in \F_{q^m}.$$
  
\noindent Rank-metric codes for which any nonzero codeword is minimal were introduced in \cite{AlfaranoBorelloNeriRavagnani2022JCTA} with the name of {\it minimal} rank-metric codes, in analogy with the theory of codes in the Hamming metric. These codes have interesting combinatorial and geometric properties besides important applications, for instance to the theory of secret sharing schemes, as shown by Massey in {\cite{massey1993minimal,massey1995some}}. 

Constructing minimal rank-metric codes is not an easy task particularly when assuming short length and low dimension. However, exhibiting new minimal code with short length can be an useful tool towards the construction of longer examples, since the property of being minimal is preserved when extending the code.

In \cite{AlfaranoBorelloNeriRavagnani2022JCTA}, the authors investigated minimal $[n,3]_{q^m/q}$ codes, by connecting them to a class of \emph{scattered subspaces} in a $3$-dimensional $\F_{q^m}$-space; indeed a special type of $q$-{\it systems}, which may be seen as a $q$-analog of the concept of  classical {\it cutting blocking sets}.  Also, in \cite[Corollary 5.10]{AlfaranoBorelloNeriRavagnani2022JCTA} it is proven that the shortest minimal codes with $k=3$ have length $n \geq m+2$. In Theorem 6.7 of the same article, existence of such extremal examples is  established, at least for some parameters, whenever $m\geq 4$, proving that in those cases the theoretical bound is sharp. 
For $m$ odd, an existence result is also given for every $n\geq m + 2$, but only under the further hypothesis that $m\not\equiv 3, 5 \pmod 6$.\\

In regard to these latter achievements, we underline here that they are not constructive in nature. At the best of our knowledge, the unique explicit construction attaining above mentioned lower bound known so far was exhibited in \cite{BartoliCsajbokMarinoTrombetti2021} for $m=3$ and any $q$, and for $m=5$ and some values of the characteristic , namely for $q=p^e$ with $p\in\{2,3,5\}$. The latter examples give rise to {\it maximum scattered subspaces} in $ \F_{q^5}^3$ which are the only known examples of such objects in $\F_{q^m}^k$, in the case when $km$ odd, and $(k,m)\ne (3,3)$.

The existence of minimal rank codes has been faced with recently also in \cite{BMN} and \cite{GruicaRavagnaniSheekeyZullo2022}. In particular, generalizing \cite[Theorem 6.3]{AlfaranoBorelloNeriRavagnani2022JCTA}, in \cite[Theorem 6.3]{BMN} a family of $[n,k]_{q^4/q}$ minimal rank-metric codes of shortest length in the case when $(k,m)=(4,3)$ and $q$ is an odd power of $2$, were constructed. Also, in \cite[Theorem 7.16]{GruicaRavagnaniSheekeyZullo2022} minimal $[m+3,3]_{q^m/q}$ examples have been exhibited for all $m\geq 4$.

\medskip

\noindent 
%In this article, we provide the first %family of linear
%examples of minimal $[m+2,3]_{q^m/q}$ codes existing for infinite values of $q$ and $m$ satisfying the mild conditions $\gcd((q^2-q+1)!,m)=1$ and $m$ is odd. 
\noindent In this article, we provide a new class of minimal $[m+2,3]_{q^m/q}$ codes existing for infinite values of $q$ and $m$ having maximum possible value for the minimum distance as well as  the largest possible value for the second generalized weight.
As existence conditions for such codes, we require some arithmetic conditions involving $q$ and $m$ and that a certain polynomial in $\F_q[X]$ has no roots in $\F_{q^m}\setminus \F_q$ (cf. Theorem \ref{MainTheorem} for the precise statement and subsequent results). These codes are obtained by constructing a class of $(m+2)$-dimensional scattered $\F_q$-linear sets of the projective plane $\PG(2,q^m)$. These are interesting in their own right, as they turn out to have three intersection characters with respect to lines of $\PG(2,q^m)$. When $m=5$, this shows that maximum scattered linear sets of $\PG(k-1,q^m)$, when $km$ is odd, have a  different combinatorial behaviour in comparison with maximum linear sets in the case when $km$ is even. Indeed, in the latter case they have two intersection characters (cf. \cite{blokhuis2000scattered}). 

The cases $m=5$ and $m=7$, are studied in more details; indeed, in these cases we provide sufficient conditions in order for the above mentioned linear sets to be cutting blocking sets of minimal dimension. Finally, we show the existence of a family of maximum scattered subspaces in $\F_{q^5}^3$ for any $q=2^{2h+1}$, $h\geq 1$, and $q\equiv 2,3 \pmod 5$ odd. 

\medskip
\noindent\textbf{Outline.} The paper is structured as follows. Section \ref{sec:preliminaries} introduces the main objects and notions that we need in the paper, giving a brief recap on rank-metric codes, $q$-systems, their evasiveness and cutting properties. In Section \ref{sec:construction} we construct a class of $[m+2,3]_{q^m/q}$ minimal rank codes. The cases $m=5$ and $m=7$ are further investigated in Subsections \ref{subsec:Case5} and \ref{subsec:Case7}. 
In Section \ref{sec:equivalence} we deal with the equivalence issue.
In Section \ref{sec:characters} we study the intersections between the corresponding $\F_q$-linear sets of $\PG(2,q^m)$ and the lines.
We conclude in Section \ref{sec:further results}, listing some open problems and new research directions.

\section{Preliminaries}\label{sec:preliminaries}

In this section, we will give some details on the notions of $q$-systems, evasive (and scattered) subspaces, and linear cutting blocking sets of $\F_{q^m}^k$, elaborating on the connections among these geometric objects, as well as on the close link they have with minimal linear rank-metric codes defined in the previous section.

A \textit{generator matrix} of an $[n,k]_{q^m/q}$ code ${\cal C} \subseteq \F_{q^m}^n$, is a matrix $G=(g^T_1 \mid g^T_2 \mid \cdots \mid g^T_n) \in \F^{k \times n}_{q^m}$, $g_i \in \F_{q^m}^k$, whose rows span the code as an $\F_{q^m}$-linear space. 

The dual code of $\cC$ is defined to be:
\begin{equation*}
   \cC^\perp
= \{v \in \F_{q^m}^n
\colon v \cdot c = 0\,\,\forall c \in  \cC\}, 
\end{equation*}
where $\cdot$ is the standard scalar product of $\F^n_{q^m}$.

Finally, concerning with equivalence of codes there are few ways to introduce this notion. Here, we only consider equivalences of codes given by linear isometries of the ambient space $\F_{q^m}^n$ (for more details see \cite{Berger}). More precisely, we say that two $[n,k,d]_{q^m/q}$ codes $\cC_1, \cC_2$ are \textit{(linearly) equivalent} if there exist $A\in \GL(n,q)$ and $a\in\F_{q^m}^*$ such that 
\begin{equation*}
\cC_2=a\cC_1\cdot A=\{avA \,:\, v \in \cC_1\}.
\end{equation*}
The class of all (linearly) equivalent $[n,k,d]_{q^m/q}$ codes will be indicated by $\textfrak{C}[n,k,d]_{q^m/q}$.

\subsection{$q$-Systems, evasive and cutting subspaces}

In \cite{Randrianarisoa2020geometric},  linear non degenerate $[n,k]_{q^m/q}$ codes were associated with a class of $n$-dimensional $\F_q$-subspaces of $\F_{q^m}^k$. These subspaces are named there $q$-systems, and they constitute a natural generalization, within the rank metric context, of the concept of {\it projective system} associated with a linear {\it non-degenerate} Hamming code. 

The notion of $q$-system was first introduced in \cite{SheekeyVoorde2020duality}, in a different and more general geometric setting. However, in the following we recall the formulation appearing in \cite{Randrianarisoa2020geometric}, which is more consistent with recent literature on the topic, and with the notation we use in the present article. 

Let $\cU$ be an $\F_q$-subspace of $\F_{q^m}^k$ and let $H$ be an $\F_{q^m}$-subspace of $\F_{q^m}^k$. The \textit{weight} of $H$ in $\cU$ is $\mathrm{wt}_{\cU}(H)=\dim_{\F_q}(H\cap \cU)$. Assume now that $\cU$ has dimension $n$ over $\F_q$. 

We say that $\cU$ is a $[n,k,d]_{q^m/q}$ \textit{system} if $\langle \cU \rangle_{\F_{q^m}}=\F_{q^m}^k$ and 
\begin{equation*}
d=\,n-\mathrm{max}\{\mathrm{wt}_{\cU}(H) \,:\, H \subseteq \F_{q^m}^k \textnormal{ with } \dim_{\F_{q^m}}(H)=k-1 \}.
\end{equation*}

More generally, for each $1 \leq \rho \leq k-1$, the parameters

\begin{equation*}
d_{\rho}=\,n-\mathrm{max}\{\mathrm{wt}_{\cU}(H) \,:\, H \subseteq \F_{q^m}^k \textnormal{ with } \dim_{\F_{q^m}}(H)=k-\rho \},
\end{equation*}
are known as the {\it $\rho$-generalized rank weight} of the system $\cU$, see \cite[Definition 4]{Randrianarisoa2020geometric}.

 As before, if the parameter $d$ is not relevant, we will write that $\cU$ is an $[n,k]_{q^m/q}$ system. Furthermore, when none of the parameters is relevant, we will generically refer to $\cU$ as to a $q$-system.

 Two $[n,k,d]_{q^m/q}$ systems $\cU_1,\cU_2$ are (linearly) equivalent if there exists $A\in \GL(k,q^m)$ such that 
 \begin{equation*}
 \cU_1\cdot A:=\{uA \,:\, u \in \cU_1\}=\cU_2.
 \end{equation*}

The class of all (linearly) equivalent $[n,k,d]_{q^m/q}$ systems will be indicated by $\textfrak{U}[n,k,d]_{q^m/q}.$ 

In \cite{Randrianarisoa2020geometric}, a one-to-one correspondence between the equivalence class of non-degenerate codes $\textfrak{C}[n,k,d]_{q^m/q}$ and the equivalence class of the $q$-systems $\textfrak{U}[n,k,d]_{q^m/q}$, is established. Precisely, once defined the following maps:

$$\begin{array}{rccc}\Phi: & \textfrak C(n,k,d)_{q^m/q} &\longrightarrow &\textfrak U(n,k,d)_{q^m/q} \\
& [\rowsp( g_1^T \mid \ldots \mid g_n^T)] & \longmapsto & [\langle g_1, \ldots, g_n\rangle_{\mathbb{F}_q}] \end{array} $$

$$\begin{array}{rccc}\Psi: & \textfrak U(n,k,d)_{q^m/q} &\longrightarrow &\textfrak C(n,k,d)_{q^m/q} ,\\
& [\langle g_1, \ldots, g_n \rangle_{\mathbb{F}_q}] & \longmapsto & [\rowsp( g_1^T \mid \ldots \mid g_n^T)] \end{array},$$

the following result is stated.

\begin{theorem}\label{th:correspondence}
 The maps $\Phi$ and $\Psi$ are well-defined and they are
the inverses of each other. Hence, they define a 1-to-1 correspondence between equivalence classes of $\Fmkd$ codes and equivalence classes of $\Fmkd$ systems. Moreover, under this correspondence that associates a $\Fmkd$ code $\C$ to an $\Fmkd$ system $\mU$, codewords of $\C$ of rank weight $w$ correspond to $\mathbb{F}_{q^m}$-hyperplanes $H$ of $\mathbb{F}_{q^m}^k$ with $\mathrm{wt}_\cU(H)=n-w$.
\end{theorem}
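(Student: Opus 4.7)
The plan is to address the three assertions in order---well-definedness of $\Phi$ and $\Psi$, the fact that they are mutually inverse, and the weight--hyperplane correspondence (which in particular will imply that the parameter $d$ is preserved). The key underlying fact is that a $k \times n$ matrix over $\F_{q^m}$ can be read either as the generator matrix of a code (via its row span) or as a list of $n$ column vectors spanning a $q$-system (via the $\F_q$-span of its columns); the two equivalence relations live on opposite ``sides'' of this matrix.

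For the well-definedness of $\Phi$, observe that a generator matrix $G = (g_1^\top \mid \cdots \mid g_n^\top)$ of a given code is determined only up to left multiplication by some $M \in \GL(k, q^m)$ (change of $\F_{q^m}$-basis of the code) and, under code equivalence, up to right multiplication by some $A \in \GL(n,q)$. The first operation replaces each column $g_i^\top$ by $M g_i^\top$, hence sends $\mU = \langle g_1, \ldots, g_n \rangle_{\F_q}$ to $\mU \cdot M^\top$, an equivalent $q$-system. The second operation replaces each column by an $\F_q$-linear combination of the columns, leaving $\mU$ unchanged. A symmetric bookkeeping yields the well-definedness of $\Psi$: changing the ordered $\F_q$-basis of $\mU$ by $A \in \GL(n,q)$ multiplies the associated matrix on the right and produces an equivalent code, whereas the action of $B \in \GL(k,q^m)$ on $\mU$ multiplies the matrix on the left, leaving its row span intact. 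That $\Phi$ and $\Psi$ are mutually inverse is then tautological, since both are defined by reading the same $k \times n$ matrix in two different ways.

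For the weight--hyperplane correspondence, fix $0 \neq u \in \F_{q^m}^k$ and consider the codeword $v = uG$, whose entries are the scalar products $u \cdot g_i^\top$. The rank weight $\omega_{\mathrm{rk}}(v)$ equals the $\F_q$-dimension of $\langle u \cdot g_1^\top, \ldots, u \cdot g_n^\top \rangle_{\F_q}$, which coincides with the $\F_q$-dimension of the image of the $\F_q$-linear evaluation map $\varphi_u \colon \mU \to \F_{q^m}$, $g \mapsto u \cdot g^\top$. Since $\ker \varphi_u = \mU \cap H$, where $H$ is the $\F_{q^m}$-hyperplane annihilated by $u$, the rank--nullity theorem gives $\dim_{\F_q}(\mU \cap H) = n - \omega_{\mathrm{rk}}(v)$. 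As every $\F_{q^m}$-hyperplane of $\F_{q^m}^k$ arises from a unique $u$ up to $\F_{q^m}$-scalar, this pairs hyperplanes with codewords modulo $\F_{q^m}^\times$, yielding the stated correspondence; and specializing to the hyperplane realizing the maximum weight produces $d = n - \max_H \mathrm{wt}_{\mU}(H)$, confirming that $\Phi$ and $\Psi$ respect $d$. The only real obstacle is notational: keeping straight which multiplication is by $\GL(k,q^m)$ and which by $\GL(n,q)$, because it is precisely this interchange of ``sides'' that makes the correspondence non-trivial.
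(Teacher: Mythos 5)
Your argument is correct and is the standard proof of this correspondence; note, however, that the paper itself does not prove Theorem~\ref{th:correspondence} at all --- it imports the statement from \cite{Randrianarisoa2020geometric}, so there is no in-paper proof to compare against. Your bookkeeping for well-definedness (left multiplication by $\GL(k,q^m)$ changes the system within its equivalence class and fixes the row space; right multiplication by $\GL(n,q)$ changes the code within its equivalence class and fixes the $\F_q$-span of the columns) and your rank--nullity computation $\dim_{\F_q}(\mU\cap H_u)=n-\omega_{\rk}(uG)$ via the evaluation map $\varphi_u$ are exactly the arguments used in that reference. The only point you leave implicit is where the non-degeneracy hypothesis enters: one must check that $\Phi$ actually lands in the class of $[n,k,d]_{q^m/q}$ \emph{systems}, i.e.\ that $g_1,\dots,g_n$ are $\F_q$-linearly independent (equivalently $\dim_{\F_q}\langle g_1,\dots,g_n\rangle_{\F_q}=n$) and that they span $\F_{q^m}^k$; the former is equivalent to $\supp_{\rk}(\cC)=\F_q^n$ and the latter to $G$ having rank $k$. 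This is a one-line verification, not a flaw in the approach.
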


Moreover, if $\C$ is the $\Fmkd$ code associated to $\mU$, the $\rho$-generalized weight of $\C$ is simply defined as the $\rho$-generalized weight of one associated system $\mU$ in $\textfrak U(n,k,d)_{q^m/q}$, see \cite[Definition 5]{Randrianarisoa2020geometric}. It is easy to see that the first generalized weight $d_1$ of $\C$, coincides with its minimum distance $d$.

Let $h,r \in \mathbb{N}$ be two positive integers such that $h<k$ and $r < km$. An $[n,k]_{q^m/q}$ system $\cU \in \F_{q^m}^k$ is said to be an $(h,r)_q$-\textit{evasive subspace} (or simply \textit{$(h,r)_q$-evasive}) if $\mathrm{wt}_{\cU}(H) \leq r$ for each $\F_{q^m}$-subspace $H$ of $\mathbb{F}_{q^m}^k$ with $\dim_{\F_{q^m}}(H)=h$.
When $r=h$, an $(h,h)_q$-evasive subspace is called \textit{$h$-scattered}. Furthermore, when $h=1$, a $1$-scattered subspace will be called simply a \textit{scattered} subspace. In \cite[Theorem 2.3]{CMPZ2021-generalising}, it was proven that if $\cU$ is an $h$-scattered subspace of $\F_{q^m}^k$, then either $\dim_{\F_q}\cU=k$ and $\cU$ defines a subgeometry of $\PG(k-1,q^m)$ (and it is $(k-1)$-scattered) or $\dim_{\F_q}(\cU)\leq \left \lfloor \frac{km}{h+1} \right \rfloor$. When the latter equality is attained, then $\cU$ is said to be \textit{maximum $h$-scattered}.

Evasive subspaces are a special family of evasive sets, which were introduced first by Pudl\'ak and R\"odl \cite{PudlakVojtvech2004}. These objects were then exploited by Guruswami \cite{guruswami2011linear, GuruswamiWangXing2016}, Dvir and Lovett in \cite{dvir2012subspace} as a tool to construct codes with optimal rate and good list decodability performance.
For more details on this topic, we refer the reader to \cite{BartoliCsajbokMarinoTrombetti2021}, where a mathematical theory of evasive subspaces was developed.

We conclude this section by recalling another family of $q$-systems which was recently introduced in \cite{AlfaranoBorelloNeriRavagnani2022JCTA}.

An $[n,k]_{q^m/q}$ system $\cU$ is said to be $t$-\textit{cutting}, if for every $\F_{q^m}$-subspace $H$ of $\F_{q^m}^k$ of co-dimension $t$, we have $\langle H \cap \cU \rangle_{\F_{q^m}} =H$. When $t=1$, we simply say that $\cU$ is a \textit{cutting} (or a \textit{linear cutting blocking set}). These objects are strictly related to minimal rank-metric codes, indeed as showed in \cite[Corollary 5.7]{AlfaranoBorelloNeriRavagnani2022JCTA}, an $[n,k]_{q^m/q}$ code $\cC$ is minimal 
if and only if the associated $[n,k]_{q^m/q}$ systems $\cU$ is a linear cutting blocking set. 

Since every system containing a linear cutting blocking set is itself a linear cutting blocking set, when aiming at the construction of new examples, it is natural to look for small ones; or, in other words, for the existence of short minimal rank-metric codes.

As a code theoretic counterpart, in \cite[Corollary 5.8]{AlfaranoBorelloNeriRavagnani2022JCTA}, it is shown that a minimal $[n,k]_{q^m/q}$ rank-metric code can be always extended to a minimal $[n+1,k]_{q^m/q }$ code. For this reason, from now on we focus on short minimal codes with parameters $[m+2,3]_{q^m/q}$. Indeed in \cite{AlfaranoBorelloNeriRavagnani2022JCTA}, exploiting the above mentioned link, the authors were able to derive a lower bound on the dimension of a $[n,k]_{q^m/q}$ linear cutting blocking set. Precisely, in \cite[Corollary 5.10]{AlfaranoBorelloNeriRavagnani2022JCTA}, it is proven that if $\cU$ is a cutting $[n,k]_{q^m/q}$ system, with $k \geq 2$, then $n\geq m+k-1$.
In particular, when $k=3$ a scattered $[n,3]_{q^m/q}$ system, with $n \geq m+2$, turns to be a linear cutting blocking set, \cite[Theorem 6.3]{AlfaranoBorelloNeriRavagnani2022JCTA}. This result has been generalized as follows.

\begin{theorem}\cite[Theorem  3.3]{BMN}\label{evasive iff cutting}
    Let $\cU$ be an $[n, k]_{q^m/q}$ system. Then, $\cU$ is $(k - 2, n -m - 1)_q$-evasive if and only if it is cutting.
\end{theorem}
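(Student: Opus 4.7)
The plan is to prove both implications by contrapositive, so that in each direction we start with a concrete failing hyperplane (resp.\ $(k-2)$-dimensional subspace) and produce a failing subspace of the other type. The single dimensional fact that drives everything is the following: for any $\F_{q^m}$-subspace $H$ of $\F_{q^m}^k$ of codimension $1$, the quotient $\F_{q^m}^k/H$ has $\F_q$-dimension $m$, so from rank--nullity applied to $\cU\hookrightarrow \F_{q^m}^k\twoheadrightarrow \F_{q^m}^k/H$ we get $\dim_{\F_q}(\cU\cap H)\geq n-m$. I will also use repeatedly that if $W\subseteq H$ is an $\F_{q^m}$-subspace containing $\langle\cU\cap H\rangle_{\F_{q^m}}$, then $\cU\cap H=\cU\cap W$.

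For the direction ``cutting $\Rightarrow$ evasive'' I argue by contrapositive: suppose $\cU$ is not cutting, so there is an $\F_{q^m}$-hyperplane $H$ with $\langle\cU\cap H\rangle_{\F_{q^m}}\subsetneq H$. Choose any $\F_{q^m}$-subspace $W$ of $H$ of dimension $k-2$ that contains $\langle\cU\cap H\rangle_{\F_{q^m}}$. Then $\cU\cap W=\cU\cap H$, and by the opening observation $\dim_{\F_q}(\cU\cap W)\geq n-m>n-m-1$. Hence $\cU$ fails the $(k-2,n-m-1)_q$-evasive property on $W$.

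For the direction ``evasive $\Rightarrow$ cutting'' I again work by contrapositive. Assume $\cU$ is not $(k-2,n-m-1)_q$-evasive, so there is an $\F_{q^m}$-subspace $W$ of dimension $k-2$ with $\dim_{\F_q}(\cU\cap W)\geq n-m$. Consider the canonical projection $\pi\colon\F_{q^m}^k\to\F_{q^m}^k/W$; the target is an $\F_{q^m}$-space of dimension $2$. The image $\pi(\cU)$ is an $\F_q$-subspace of $\F_q$-dimension at most $n-(n-m)=m$, so the associated $\F_q$-linear set $L_{\pi(\cU)}$ in $\PG(1,q^m)$ has at most $(q^m-1)/(q-1)<q^m+1$ points. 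Therefore some point of $\PG(1,q^m)$ is missed by $L_{\pi(\cU)}$; pulling back gives an $\F_{q^m}$-hyperplane $H\supset W$ with $\cU\cap H=\cU\cap W\subseteq W$. Since $\dim_{\F_{q^m}}W=k-2<k-1=\dim_{\F_{q^m}}H$, we obtain $\langle\cU\cap H\rangle_{\F_{q^m}}\subsetneq H$, and $\cU$ is not cutting.

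The main obstacle is the last step of the second direction: one has to be certain that an $\F_q$-linear set of rank at most $m$ in $\PG(1,q^m)$ cannot exhaust the whole projective line. This is just the classical upper bound $(q^m-1)/(q-1)$ on the number of points of such a linear set, but it is genuinely the only nontrivial ingredient—the rest of the proof is the two quotient/rank--nullity bookkeeping arguments above. The final write-up should therefore emphasize this estimate and be explicit about how a point of $\PG(1,q^m)$ outside $L_{\pi(\cU)}$ lifts to the desired hyperplane $H\supset W$.
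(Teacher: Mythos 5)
Both of your arguments are sound, but note that you have swapped their labels: assuming ``not cutting'' and exhibiting a $(k-2)$-space of weight $\ge n-m$ is the contrapositive of ``evasive $\Rightarrow$ cutting'', while assuming ``not evasive'' and exhibiting a hyperplane $H$ with $\langle \cU\cap H\rangle_{\F_{q^m}}\subsetneq H$ is the contrapositive of ``cutting $\Rightarrow$ evasive''. Since you prove both contrapositives the equivalence is fully established, so this is only a bookkeeping slip to correct in the write-up. Regarding the comparison: the paper does not prove this statement itself (it is quoted from \cite{BMN}), but it immediately proves two generalizations --- Theorem \ref{th:evasiveness_vs_cutting} and the unnumbered theorem following it --- from which the statement is recovered by setting $t=k-1$ and $h=n-m-1$. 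Your first argument (rank--nullity giving $\dim_{\F_q}(\cU\cap H)\ge n-m$ for every hyperplane $H$, then squeezing $\langle\cU\cap H\rangle_{\F_{q^m}}$ into a $(k-2)$-dimensional subspace) is essentially the same as the paper's proof of Theorem \ref{th:evasiveness_vs_cutting}. Your second argument takes a genuinely different route from the paper's converse: there, the authors count vectors of $\cU$ over all subspaces through the offending $(k-2)$-space $W$ and force $|\cU|>q^n$; you instead project modulo $W$ and use the fact that an $\F_q$-linear set of rank at most $m$ has at most $(q^m-1)/(q-1)<q^m+1$ points and so cannot cover $\PG(1,q^m)$, producing a hyperplane through $W$ that meets $\cU$ only inside $W$. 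The two are the same pigeonhole principle in different guises, but yours is cleaner in this codimension and isolates the single nontrivial estimate, whereas the paper's counting version extends more directly to $(k-t)$-cutting for arbitrary $t$.
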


As a consequence of the theorem above, if $m<(k-1)^2$ then there are no linear cutting $[m+k-1,k]_{q^m/q}$ systems, \cite[Corollary 3.5]{BMN}.

The following result provides a quite natural generalization of \cite[Theorem 7.5]{GruicaRavagnaniSheekeyZullo2022} and  \cite[Theorem 3.3]{BMN} to the case of evasive subspaces.  
\begin{theorem}\label{th:evasiveness_vs_cutting}
    Let $\cU$ be an $[n, k]_{q^m/q}$ system with $n>m(k-t)-1$. Then $\cU$ is a $(k-t)$-cutting if and only if it is a $(t-1,n-(k-t)m-1)_q$-evasive subspace of $\mathbb{F}_{q^m}^k$.
\end{theorem}
\begin{proof}
Let $h:=n-(k-t)m-1$. Firstly, let assume that $\cU$ is a $(k-t)$-cutting subspace of $\F^k_{q^m}$ of dimension $n$.
Since $\cU$ is $(k-t)$-cutting, every $t$-dimensional subspaces through $M$ has $q^t>q^h$ vectors in $\cU$. It follows that 
\begin{equation*}
\begin{split}
q^n &=| \cU|  \geq \sum_{\overset{M \leq L}{ \dim L =t}} \left (|L \cap \cU | \setminus |M \cap \cU| \right )+ |M \cap \cU| \geq (q^t-q^h)\frac{q^{m(k-t)+1}-1}{q^m-1}+q^h \\
& \geq q^{h}(q-1)\frac{q^{m(k-t)+1}-1}{q^m-1}+q^h
\geq q^{n-m(k-t)}(q-1)\frac{q^{m(k-t+1)}-1}{q^m-1}+q^{n-m(k-t)}\\
&\geq q^{n-m(k-t)}\left (\frac{q^{m(k-t+1)}-1}{q^m-1}+1\right )=q^{n-m(k-t)}(q^{m(k-t)}+q^{m(k-t-1)}+\ldots +2) >q^n
\end{split}
\end{equation*} 
a contradiction.\\
On the other hand, let $W$ be an $\F_{q^m}$-subspace of $\F^k_{q^m}$ of dimension $t$. Then
\[\mathrm{wt}_\cU(W)=n+tm-\dim_{\F_q}(\cU+W)\geq n+tm-km=n-m(k-t)\]
    If $\langle W\cap \cU\rangle_{\F_{q^m}}\subsetneq W$ then there would be an $\F_{q^m}$-subspace, say $\overline W$, of $\mathbb{F}_{q^m}^k$ of dimension $t-1$ such that $\langle W\cap \cU\rangle_{\F_{q^m}}\subseteq \overline{W}\subset W$. Since $n-m(k-t)-1 \geq \mathrm{wt}_{\cU}(\overline W)=\mathrm{wt}_{\cU}(W)\geq n-m(k-t)$, one gets a contradiction.
\end{proof}

%We observe that this latter implication, in the case $k=3$, $t=2$, and $h=1$, follows directly from \cite[Theorem 6.3]{AlfaranoBorelloNeriRavagnani2022JCTA}. Indeed, Theorem \ref{th:evasiveness_vs_cutting} can also be obtained from \cite[Theorem 3.3]{BMN} by putting $t=k-1$, and $h=\dim_{\F_q}\cU-m-1$. 
%Observe also that we can now see Theorem \ref{evasive iff cutting} as particular case of Theorem \ref{th:evasiveness_vs_cutting} for $t=k-1$.

\section{Construction of a class of \texorpdfstring{$[m+2,3]_{q^m/q}$}{} systems, $m$ odd}
\label{sec:construction}

 As a consequence of what explained in the previous section, the shortest minimal $3$-dimensional rank-metric codes over $\F_{q^m}$, are $[m+2,3]_{q^m/q}$ codes.  Also, their existence reduces to the existence of $(m+2)$-dimensional scattered subspaces of $\F_{q^m}^3$, which in turn are $(m+2)$-dimensional linear cutting blocking subspaces of $\F_{q^m}^3$. 

In this section, we will construct a new infinite family of such objects for $m\geq 5$ odd. Note that for $m$ even such objects always exist \cite{ball2000linear, bartoli2018maximum,blokhuis2000scattered,  csajbok2017maximum}, while for $m\leq 3$ they do not exist, \cite{blokhuis2000scattered}. We first recall some concepts and tools from the theory of $q$-linearized polynomials, along with some results on the number of solutions of equations associated with them, taken from \cite{SheekeyMcGuire2019}.

Let $q=p^e$ with $p$ prime and let  $\sigma: x \in \fqm \longrightarrow x^{q^s} \in \fqm$ be a field automorphism of $\fqm$ with $1 \leq s \leq m-1$ and $\gcd(s,m)=1$.
A $\sigma$-\textit{linearized polynomial} with coefficients over $\F_{q^m}$ is a polynomial of the form
\begin{equation}\label{def:sigma_linearized_poly}
L(X) = \sum^r_{i=0} \alpha_iX^{\sigma^i}
\in \F_{q^m}[X], \quad r \in \mathbb{N}
\end{equation}
If $r$ is the largest integer such that $\alpha_r\not = 0$, then it will be called $\sigma$-\textit{degree} (or $q^s$-\textit{degree}) of $L(X)$, in symbols $\deg_{\sigma}(L)= r$. 
The $\sigma$-linearized polynomials define endomorphisms of $\F_{q^m}$ seen as a vector space over $\F_q$. It is well-known that 

$$\mathcal{\tilde{L}}_{m,q}[X] =\Bigl  \{\sum_{i=0}^{m-1}\alpha_iX^{\sigma^i} : \alpha_i \in \F_{q^m}, i \in \{0,1,\ldots,m-1\}\Bigr \}$$
can be endowed with an algebra structure $(\mathcal{\tilde{L}}_{m,q}[X],+,\circ,\cdot)$ where $+$ is the addition of polynomials, $\circ$  the composition of polynomials$\pmod{X^{\sigma^m}-X}$ and $\cdot$ the scalar multiplication by elements of $\F_{q}$, and this algebra is isomorphic to $\mathrm{End}_{\F_q}(\F_{q^m})$.

A polynomial of the form
\begin{equation}\label{eq:projective_ppoly}
P(X)=\sum^r_{i=0}
\alpha_i X^{\frac{\sigma^i-1}{\sigma-1}} \in \F_{q^m}[X], \quad r \in \mathbb{N},
\end{equation}
will be called $\sigma$-\textit{projective} polynomial with coefficients over $\F_{q^m}$. We say that $r$ is the $\sigma$-degree of the polynomial in (\ref{eq:projective_ppoly}).

Projective polynomials were introduced by Abhyankar in \cite{Abhyankar}, and have been studied over finite fields for example in \cite{Bluher2004} and in \cite{Collisions}.\\
Now, let  $L(X)=\sum^r_{i=0}\alpha_iX^{\sigma^i}$ be a $\sigma$-linearized polynomial with $\sigma$-degree $r$. Then, it is possible to associate with $L(X)$ a projective polynomial, i.e. the polynomial $P_L(X)$ obtained by simply substituting in  (\ref{def:sigma_linearized_poly}) the $\sigma$-power $\sigma^i$ by $\frac{\sigma^i-1}{\sigma -1},$ for each $i \in \{0,1,...,r\}$. Note that one has $L(X)=XP_L(X^{\sigma-1})$ and this correspondence with the linearized polynomials is 1-1.

Let $L(X)=\sum^r_{i=0}\alpha_iX^{\sigma^i} \in \tilde{\mathcal{L}}_{m,\sigma}$ with $\deg_{\sigma}(L)=r$. The \textit{companion matrix} of $L(X)$ (and hence also of $P_L(X)$) is the square matrix $C_L$ of order $r$:
\begin{equation*}
C_L = 
\begin{pmatrix}
0 & 0 & \ldots & 0 & -\frac{\alpha_0}{\alpha_r}\\
1 & 0 &\ldots & 0 & -\frac{\alpha_1}{\alpha_r}\\
\vdots & \vdots & \ddots & \vdots & \vdots \\
0 & 0 & \ldots & 1 & -\frac{\alpha_{r-1}}{\alpha_r}
\end{pmatrix}.
\end{equation*}
Consider the matrix
\begin{equation*}
A_L = C_LC_L^\sigma \ldots C_{L}^{\sigma^{m-1}}.
\end{equation*}
In \cite{SheekeyMcGuire2019}, necessary and sufficient
conditions on the
number of roots of a projective polynomial $P_L(X)$ are given, in relation to the eigenvectors of $A_L$.
More precisely
 \begin{theorem}\label{eigen}\cite[Theorem 6]{SheekeyMcGuire2019} The number of roots of $P_L$ in $\F_{q^m}$ equals 
 
\begin{equation*}
\sum_{\lambda \in \F_{q}}
\frac{q^{n_\lambda}- 1}{q - 1} ,
\end{equation*}
where $n_\lambda$ is the dimension of the eigenspace of $A_L$ corresponding to the eigenvalue $\lambda$.
The number of roots of $L(X)$ in $\F_{q^m}$ is equal to $q^{n_1}$, i.e., to the size of the eigenspace of $A_L$
corresponding to the eigenvalue $1$.
 \end{theorem}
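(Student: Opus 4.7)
The plan is to set up a bijective correspondence between roots of $P_L$ (respectively $L$) and a controlled family of left eigenvectors of $A_L$ in $\F_{q^m}^d$, and then to count those eigenvectors via a Galois-descent (Hilbert 90) argument. For each root $\beta\in\F_{q^m}$ of $P_L$ I set
\[v_\beta := \bigl(1,\ \beta,\ \beta^{1+\sigma},\ \ldots,\ \beta^{1+\sigma+\cdots+\sigma^{d-2}}\bigr),\]
viewed as a row vector in $\F_{q^m}^d$. Written out against the explicit form of $C_L$, the first $d-1$ coordinates of $v_\beta C_L$ match those of $\beta\,v_\beta^{\sigma}$ by telescoping of exponents, while the last coordinate matches $\beta\cdot v_{\beta,d}^{\sigma}$ precisely because $P_L(\beta)=0$; hence $v_\beta C_L=\beta\,v_\beta^{\sigma}$. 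Iterating along the $\sigma$-orbit and using $\sigma^m=\mathrm{id}$ yields $v_\beta\,A_L=N(\beta)\,v_\beta$, where $N$ denotes the norm $\F_{q^m}\to\F_q$. Thus every root $\beta$ of $P_L$ produces a nonzero left eigenvector $v_\beta$ of $A_L$ with $\F_q$-eigenvalue $\lambda:=N(\beta)$, and the assignment $\beta\mapsto v_\beta$ is injective because the second coordinate of $v_\beta$ recovers $\beta$.

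For $\lambda\in\F_q^*$ write $V_\lambda\subseteq\F_{q^m}^d$ for the left $\lambda$-eigenspace of $A_L$, and introduce the $\sigma^{-1}$-semilinear operator $\psi(v):=v^{\sigma^{-1}}C_L^{\sigma^{-1}}$. The identity $A_L C_L=C_L A_L^{\sigma}$ (immediate from the definition of $A_L$ together with $\sigma^m=\mathrm{id}$) implies that $\psi$ stabilizes $V_\lambda$ and that $\psi^m|_{V_\lambda}=\lambda\cdot\mathrm{id}$. For each $\mu\in\F_{q^m}^*$ with $N(\mu)=\lambda$, the rescaling $\psi':=\mu^{-1}\psi$ is $\sigma^{-1}$-semilinear with $(\psi')^m=\mathrm{id}_{V_\lambda}$, so by Hilbert 90 its fixed set
\[W_\mu:=\{v\in V_\lambda\,:\,\psi(v)=\mu v\}\]
is an $\F_q$-form of $V_\lambda$; in particular $\dim_{\F_q}W_\mu=n_\lambda$ and $|W_\mu|=q^{n_\lambda}$.

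Distinct $W_\mu,\,W_{\mu'}$ meet only in $\{0\}$, and the union $S:=\bigcup_{N(\mu)=\lambda}W_\mu$ is $\F_{q^m}^*$-stable because scaling $v\in W_\mu$ by $c$ lands in $W_{\mu c^{\sigma^{-1}-1}}$, still of norm $\lambda$. With $(q^m-1)/(q-1)$ admissible $\mu$ one gets $|S\setminus\{0\}|=\frac{(q^m-1)(q^{n_\lambda}-1)}{q-1}$, and partitioning by $\F_{q^m}^*$-orbits of size $q^m-1$ leaves exactly $(q^{n_\lambda}-1)/(q-1)$ projective points meeting $S$. The map $\beta\mapsto[v_\beta]$ from roots of $P_L$ with $N(\beta)=\lambda$ to those projective points is also surjective: for $v\in W_\mu\setminus\{0\}$, the recurrence $v_{j+1}=\mu^{\sigma}v_j^{\sigma}$ extracted from the first $d-1$ coordinates of $vC_L=\mu^{\sigma}v^{\sigma}$ forces $v_1\neq 0$; rescaling so that $v_1=1$ produces $v=v_\beta$ with $\beta=\mu^{\sigma}v_1^{\sigma-1}$, and the last-coordinate equation collapses to $v_1\,P_L(\beta)=0$, hence $P_L(\beta)=0$. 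Summing over $\lambda\in\F_q$ (the boundary case $\lambda=0$ being straightforward, since $\beta=0$ is a root of $P_L$ iff $\alpha_0=0$ iff $A_L$ is singular) yields the formula for $P_L$. For $L$ itself, the identity $L(X)=X\,P_L(X^{\sigma-1})$ and the fact that $x\mapsto x^{\sigma-1}$ is a $(q-1)$-to-$1$ surjection of $\F_{q^m}^*$ onto the norm-one subgroup show that the nonzero roots of $L$ number $(q-1)\cdot(q^{n_1}-1)/(q-1)=q^{n_1}-1$; adding the root $x=0$ gives $q^{n_1}$.

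The principal technical hurdle I anticipate is the semilinear book-keeping in the Galois-descent step: checking $\psi(V_\lambda)\subseteq V_\lambda$ and $\psi^m|_{V_\lambda}=\lambda\cdot\mathrm{id}$ cleanly through $A_L C_L=C_L A_L^{\sigma}$, and invoking Hilbert 90 in the right form for a cyclic Galois extension. A secondary delicate point is the surjectivity part of the counting step, where one must argue the non-vanishing of the first coordinate from the recurrence and exploit the rescaling freedom inside the $\F_{q^m}^*$-orbit of $v$ to make the last-coordinate equation collapse exactly to $P_L(\beta)=0$.
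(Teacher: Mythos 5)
The paper itself gives no proof of this statement --- it is quoted verbatim from \cite[Theorem 6]{SheekeyMcGuire2019} --- so your proposal can only be measured against the source, and it is correct and essentially reproduces that reference's argument: identify each root $\beta$ of $P_L$ with the projective point of the vector $v_\beta=(1,\beta,\beta^{1+\sigma},\dots)$ fixed by the semilinear map $v\mapsto v^{\sigma}C_L$, observe $v_\beta A_L=N(\beta)v_\beta$, and count the fixed projective points inside each eigenspace $V_\lambda$ as a subgeometry $\PG(n_\lambda-1,q)$ via cyclic Galois descent, which is exactly your $W_\mu$/Hilbert~90 computation. The only caveat: your parenthetical dismissal of $\lambda=0$ is too quick as stated, since the formula genuinely requires the standing nondegeneracy hypothesis $\alpha_0\alpha_d\neq 0$ (which the paper imposes just before the theorem); without it the statement can fail, e.g.\ $L(X)=X^{\sigma^2}$ has $P_L(X)=X^{1+\sigma}$ with a single root, while $A_L=0$ gives $n_0=2$ and the sum would predict $q+1$ --- under the hypothesis, $A_L$ is invertible, the $\lambda=0$ term vanishes, and your argument is complete.
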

Since the matrix $A_L$ is product of $\sigma^i$-th power of $C_L$ with $i \in \{1,\ldots,m-1\}$, it is rather complicated to give an explicit criteria for determining the
number of roots of $L$ and $P_L$ in the case of a general $\sigma$-degree; however, an explicit one is developed in the case $r \in \{2,3\}$, \cite[Section 4 and 5]{SheekeyMcGuire2019}.
Since we will make heavy use of it in subsequent Section \ref{sec:construction}, we will recap this criterion in the case $r=\deg_{\sigma}(L)=2$.

Let $L(X) = \alpha_0x + \alpha_1x^\sigma + \alpha_2x^{\sigma^2}$
with $\alpha_i \in  \F_{q^m}$, $\alpha_0\alpha_2 \not = 0$. Then
\begin{equation*}
 C_L =
\begin{pmatrix}
0 & -\alpha_0/\alpha_2 \\
1 & -\alpha_1/\alpha_2
\end{pmatrix}.
\end{equation*}
Note that it is always possible to assume that 
 $\alpha_0\alpha_2 \not = 0$. Indeed, since $\deg_\sigma(L)=2$, $\alpha_2 \not =0$ and  if $ \alpha_0 = 0$, one can usually apply a power
of $\sigma$ to $L$ to obtain a polynomial that has $\alpha_0 \not = 0$.
Following \cite[Proposition $4$]{SheekeyMcGuire2019}, by putting $u=\frac{\alpha_0^\sigma\alpha_2}{\alpha^{\sigma+1}_1}$, one has 
\begin{equation}\label{matrixAL}
A_L = \mathrm{N}_{q^m/q}(\alpha_1/ \alpha_2)
\begin{pmatrix} 
-u^{\sigma^{-1}}G_{m-2}^{\sigma} & -(\alpha_0/\alpha_1)G^\sigma_{m-1}\\
(\alpha_2/\alpha_1)^{\sigma^{-1}}G_{m-1} & G_m
\end{pmatrix},
\end{equation}
where $G_k$'s  are defined recursively according to the formula:
\begin{equation}\label{eq:recursion}
G_k + G^{\sigma}_{k-1} + uG_{k-2}^{\sigma^2} = 0.
\end{equation}
with $k \in \{2,\ldots,m\}$ and $G_0=1$ and $G_1=-1$.
So, in order to compute the number of roots of $L(X)$ and $P_L(X)$, we need to know the characteristic
polynomial 
$$\chi_L(X)=X^2-\mathrm{Tr}(A_L)X+\det(A_L)$$
of $A_L$. By \eqref{matrixAL}, we get 
\begin{equation*}
\det(A_L) = \mathrm{N}_{q^m/q}(\alpha_1/\alpha_2)^2u^{\sigma^{-1}}
(G_{m-1}^{\sigma+1}- G_mG^\sigma_{m-2})=\mathrm{N}_{q^m/q} (\alpha_0/\alpha_2)
\end{equation*}
and
\begin{equation}\label{eq:description_Tr(A)}
\mathrm{Tr}(A_L) = \mathrm{N}_{q^m/q} (\alpha_1/\alpha_2)(G_{m} - u^{\sigma^{-1}}
G^\sigma_{m-2})= \mathrm{N}_{q^m/q} (\alpha_1/\alpha_2)(G_m + G^\sigma_m+ G^\sigma_{m-1}
).
\end{equation}

By Theorem \ref{eigen}, the number of roots of $P_L(X)$ in $\F_{q^m}$ is determined by the dimension of the
eigenspaces of $A_L$ corresponding to eigenvalues in $\F_q$. Hence, if $\chi_L(X)$ has two distinct
roots in $\F_q$, then $P_L(X)$ has two roots in $\F_{q^m}$. If $\chi_L(X)$ has no roots in $\F_q$, then $P_L(X)$ has no
roots in $\F_{q^m}$. If otherwise $\chi_L(X)$ has a double root in $\F_q$ , then $P_L(X)$ has either $1$ or $q + 1$ root(s) in
$\F_{q^m}$; if $A_L$ is a scalar multiple of the identity then $P_L(X)$ has $q + 1$ roots, otherwise it has 1
root. 

\medskip

\noindent Next, consider the following $\F_q$-linear subspace $\cU_\sigma$ of $\mathbb{F}_{q^m}^3$, $m\geq 5$:
\begin{equation}\label{stsubspace}
    \cU_\sigma=\{(x,\xqs+a,\xxs +b):x \in\fqm,a,b\in\fq\}. 
\end{equation} 

We are now in the position of proving our main contributions; namely, we will give some conditions under which $\cU_\sigma$ is scattered. 

Consider the $2$-dimensional $\fqm$-vector subspaces of $\mathbb{F}_{q^m}^3$ containing the vector $(0,0,1)$, they have equation 
\begin{equation}\label{bundle}
\ell_{\lambda}: x_1=\lambda x_0 \textnormal{ or } \ell_{\infty}: x_0=0
\end{equation}
and define $\mathcal{Z}_{\lambda,\sigma}=\cU_{\sigma} \cap \ell_\lambda$ with $\lambda \in \fqm \cup \{\infty\}$.
Since $\cU_\sigma \cap \langle v \rangle _{\F_{q^m}}=\mathcal{Z}_{\lambda,\sigma} \cap \langle v \rangle_{\F_{q^m}}$ for some $\lambda \in \F_{q^m} \cup \{\infty\}$,  the following result is plain. 
\begin{lemma}\label{local-scatt}
Let $\cU_\sigma$ be the $\F_q$-linear subspace of $\F_{q^m}^3$ as defined in \eqref{stsubspace}. Then $\cU_\sigma$ is a scattered subspace if and only if $\mathcal{Z}_{\lambda,\sigma}$ is scattered for any $\lambda \in \fqm \cup \{\infty\}$.
\end{lemma}
%\begin{center}
%$(\diamond)$ \quad $\cU_\sigma$ is a scattered subspace if and only if $\mathcal{Z}_{\lambda,\sigma}$ is scattered for any $\lambda \in \fqm \cup \{\infty\}$.
%\end{center}

\medskip

Firstly, note that $$\mathcal{U}_\sigma =  \mathcal{W}_\sigma \oplus \cZ_{\infty,\sigma},$$ where
$\mathcal{W}_\sigma=\{(x,\xqs,\xxs ):x \in\fqm\}$, and  $\cZ_{\infty,\sigma}=\langle (0,0,1),(0,1,0)\rangle_{\F_q}$; also, these are scattered subspaces.

Then, let $\lambda \in \fqm$ and $\bar{v}=(\bx,\bxqs+\ba,\bxxs +\bb) \in \cZ_
{\lambda,\sigma}$ with $\bx \in\fqm^*,\ba,\bb\in\fq$ and, hence, $\bx^\sigma+ \ba= \lambda \bx$. 
The property of being scattered for $\cZ_{\lambda,\sigma}$ is equivalent to require that the number of triples $(y,y^\sigma+a,y^{\sigma^2}+b)$ with $y \in \F_{q^m},a,b \in \F_q$
such that 
\begin{equation}\label{sistema}
\begin{cases}
&\lambda y =\yqs+a\\
&\frac{y}{\bx}(\bxqs+\ba)=\yqs+a \\
&\frac{y}{\bx}(\bxxs+\bb)=\yys+b \\
 \end{cases} 
\end{equation}
is at most $q$. Since the values of $a$ and $b$ in \eqref{sistema} can be expressed in term of $y$ by using the second and third equation, one sees that  $\cZ_{\lambda,\sigma}$ is scattered if and only if for any $\bar{v} \in \cZ_{\lambda,\sigma}$, there are at most $q$ values of $y \in \F_{q^m}$ for which System \eqref{sistema} has a solution.

\begin{proposition}
If $\gcd(q-1,m)=1$, $q=p^e$, and $p$ does not divide $m$, then  for any $\lambda \in \fq$  the subspace $\mathcal{Z}_{\lambda,\sigma}$ of $\F_{q^m}^3$ is scattered.
\end{proposition}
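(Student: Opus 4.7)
I will verify the scattered property of $\cZ_{\lambda,\sigma}$ by analysing System \eqref{sistema} directly. Fix $\lambda\in\fq$ and a nonzero vector $\bar v=(\bx,\bxqs+\ba,\bxxs+\bb)\in\cZ_{\lambda,\sigma}$ with $\bx\in\fqm^*$, so that $\bxqs+\ba=\lambda\bx$, i.e.\ $\bxqs-\lambda\bx=-\ba\in\fq$. Applying $\sigma$ to this relation and using $\lambda,\ba\in\fq$ gives $\bxxs=\lambda^2\bx-(\lambda+1)\ba$, hence $\bxxs+\bb=\lambda^2\bx+c$ where $c:=\bb-(\lambda+1)\ba\in\fq$. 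Substituting into \eqref{sistema}, the second equation becomes $y^\sigma-\lambda y\in\fq$, and the third equation, after using $y^{\sigma^2}-\lambda^2 y=(\lambda+1)(y^\sigma-\lambda y)\in\fq$ (obtained by iterating the second), reduces to the single condition $(c/\bx)\,y\in\fq$. Thus the number of admissible $y$ equals
\[
N(\bar v)=\bigl|\{y\in\fqm:\ y^\sigma-\lambda y\in\fq\ \text{and}\ (c/\bx)y\in\fq\}\bigr|.
\]

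\textbf{Case $c=0$.} Only the first condition remains. Consider the $\fq$-linear map $\phi:y\mapsto y^\sigma-\lambda y$ on $\fqm$; then $N(\bar v)=|\ker\phi|\cdot|\fq\cap\mathrm{Im}\,\phi|$. If $\lambda=0$, $\phi$ is bijective and $N=q$. If $\lambda\in\fq^*$, the kernel consists of the solutions of $y^{\sigma-1}=\lambda$ in $\fqm^*$ together with $0$; such solutions exist iff $\lambda$ lies in the image of $y\mapsto y^{q^s-1}$, equivalently iff $\lambda^{(q^m-1)/(q-1)}=1$. Since $(q^m-1)/(q-1)\equiv m\pmod{q-1}$, this is $\lambda^m=1$ in $\fq^*$, and the hypothesis $\gcd(q-1,m)=1$ forces $\lambda=1$. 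Hence for $\lambda\in\fq^*\setminus\{1\}$, $\phi$ is bijective and $N=q$. For $\lambda=1$, $\ker\phi=\fq$ (1-dimensional), and an element $\bar y\in\fq$ lies in $\mathrm{Im}\,\phi$ only if $\mathrm{Tr}_{\fqm/\fq}(\bar y)=m\bar y=0$; the hypothesis $p\nmid m$ then forces $\bar y=0$, so $\fq\cap\mathrm{Im}\,\phi=\{0\}$ and again $N=q\cdot 1=q$.

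\textbf{Case $c\neq 0$.} Since $c\in\fq^*$, the condition $(c/\bx)y\in\fq$ is equivalent to $y\in\bx\cdot\fq$. Writing $y=t\bx$ with $t\in\fq$, we have $y^\sigma-\lambda y=t(\bxqs-\lambda\bx)=-t\ba\in\fq$ automatically. So exactly the $q$ values $y\in\bx\fq$ are solutions, giving $N(\bar v)=q$.

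In both cases $N(\bar v)\le q$, meaning every $1$-dimensional $\fqm$-subspace of $\ell_\lambda$ meets $\cZ_{\lambda,\sigma}$ in at most an $\fq$-line, so $\cZ_{\lambda,\sigma}$ is scattered. The main technical point is the combined use of $\gcd(q-1,m)=1$ (to rule out the non-trivial $\lambda\in\fq^*$ with $\lambda^m=1$) and $p\nmid m$ (to exclude $\fq$ from the image of $y^\sigma-y$); everything else is routine $\sigma$-linear bookkeeping.
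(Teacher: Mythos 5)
Your proof is correct, and it reaches the bound $N(\bar v)\le q$ by a genuinely different route from the paper's. The paper uses only the first equation of System \eqref{sistema}: from $y^\sigma-\lambda y\in\fq$ it derives the necessary condition $y^{\sigma^2}-(1+\lambda)y^\sigma+\lambda y=0$ and then rules out a kernel of size $q^2$ via the companion-matrix criterion of \cite[Theorem 1.2]{CMPZ2019}, computing $\det A_\lambda$ (where $(q-1,m)=1$ enters) and $A_1^m$ (where $p\nmid m$ enters). You instead bound the same set $\phi^{-1}(\fq)$, $\phi\colon y\mapsto y^\sigma-\lambda y$, directly as $|\ker\phi|\cdot|\fq\cap\mathrm{Im}\,\phi|$, using the norm criterion for solvability of $y^{\sigma-1}=\lambda$ (which is where $(q-1,m)=1$ appears for you) and the fact that $\mathrm{Im}(y^\sigma-y)$ is the kernel of the trace, so that $\fq\cap\mathrm{Im}\,\phi=\{0\}$ when $p\nmid m$. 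This is more elementary and self-contained -- no external result on maximum kernels of linearized polynomials is needed -- at the cost of a short case analysis on $\lambda$. Your extra step of bringing in the third equation and splitting on $c=\bb-(\lambda+1)\ba$ is not actually needed for the upper bound (the first-equation condition alone already has at most $q$ solutions under the hypotheses, which is all the paper uses), but it is harmless and has the small bonus of identifying the solution set exactly as $\bx\fq$ when $c\neq 0$, without any hypothesis on $q$ and $m$ in that case.
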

\begin{proof}
Let $\lambda$ be an element of $\F_q$ and $\bar{v}=(\bx,\bxqs+\ba,\bxxs +\bb) \in \cZ_
{\lambda,\sigma}$ with $\bx \in\fqm^*,\ba,\bb\in\fq$ and $\bx^\sigma+ \ba= \lambda \bx$. 
If $\lambda=0$, by System \eqref{sistema}, it follows $y\in\fq$, which completes this case. 

Assume $\lambda\in\fq^*$. Since $a\in\fq$, by the first equation in \eqref{sistema}, it follows that
$y^{\sigma}-\lambda y=y^{\sigma^2}-\lambda y^{\sigma}$ or, in other words that,
\begin{equation}\label{q2avoid}
 y^{\sigma^2}-(1+\lambda)y^{\sigma}+\lambda y=0.   
\end{equation} 
By \cite[Theorem 1.2]{CMPZ2019}, this polynomial has exactly $q^2$ solutions in $\F_{q^m}$ if and only if the $m$-th power of the matrix 
\begin{equation*}
    A_\lambda=\begin{pmatrix}
        0 &  -\lambda\\
        1 & 1+\lambda
\end{pmatrix},
\end{equation*}
is equal to the identity matrix $I_2$.
Thus a necessary condition for the equation in \eqref{q2avoid}  to have $q^2$ zeros is that $\lambda^m=\det(A_\lambda)^m=1$ and since $\gcd(q-1,m)=1$, this is possible only if $\lambda=1$.
So, the polynomial map $y^{\sigma^2}-(1+\lambda)y^{\sigma}+\lambda y$ has maximum kernel if and only if $A^m_1=I_2$. 
By a simple inductive argument, one gets that 
\begin{equation*}
A^m_1=
\begin{pmatrix}
   -(m-1) & -m \\
   m & m+1
\end{pmatrix}
\end{equation*}
and hence $A^m_1=I_2$ if and only if $p | m$. This is not possible by the hypothesis.
 Then the polynomial in \eqref{q2avoid} has at most $q$ zeros and this concludes the proof.
\end{proof}

Now, it remains to give conditions in order to show that the subspace  $\cZ_{\lambda,\sigma}$ is scattered for each $\lambda \in \F_{q^m} \setminus \F_q$. By System \eqref{sistema},
since $a=\lambda y - y^\sigma$, $b= \frac{y}{\bx}(\bxxs+\bb)-\yys$ and $\ba,\bb$ belong to $\fq$,
\begin{equation*}
\lambda y -\yqs=\lambdas \yqs-\yys,
\end{equation*}
which implies 
\begin{equation}\label{yquadro}
\yys=(\lambdas+1)\yqs-\lambda y
\end{equation} and 
\begin{equation}\label{equazioneb}
\frac{y}{\bx}(\bxxs+\bb)-\yys=\frac{\yqs}{\bxqs}(\bxxxs+\bb)-\yyys.
\end{equation}
Since $m\geq 3$, substituting the expression  $y^{\sigma^2}$, obtained in \eqref{yquadro}, in \eqref{equazioneb}, one gets
\begin{equation*}
\bigg( (\lambdass+1)(\lambdas+1)-\lambdas-(\lambdas+1)-\frac{\bxxxs+\bb}{\bxqs}\bigg)\yqs+ \bigg(\frac{\bxxs+\bb}{\bx}+\lambda-\lambda(\lambdass+1)\bigg)y=0.
\end{equation*}
This is a $\sigma$-linearized polynomial in $y$, and it has more than $q$ solutions if and only if its coefficients  are both identically zero and so if and only if 
\begin{equation}\label{coeffy}
\begin{cases}
&\frac{\bxxs+\bb}{\bx}+\lambda-\lambda(\lambdass+1)=0 \\
&  (\lambdass+1)(\lambdas+1)-\lambdas-(\lambdas+1)-\frac{\bxxxs+\bb}{\bxqs}=0\\
 \end{cases} 
\end{equation}
By the first equation of\eqref{coeffy}, $\frac{\bxxs+\bb}{\bx}=\lambda(\lambdass+1)-\lambda$ holds. So, a necessary condition for the equation to admit $q^2$ zeros is that  
\begin{equation*}
\lambda^{\sigma^3+\sigma}-\lambda^{\sigma^2+\sigma}-\lambda^{\sigma^2}+\lambdas=0;
\end{equation*}
which is equivalent to 
\begin{equation*}
\lambda^{\sigma^2+1}-\lambda^{\sigma+1}-\lambda^{\sigma}+\lambda=0.
\end{equation*}
Therefore, we have the following.

\begin{proposition}
Let  $\lambda \in \F_{q^m} \setminus \F_q$, $m \geq 5$. If $\lambda$ is not a root of the polynomial
\begin{equation}\label{polynomial}
  Q(X)=  X^{\sigma^2+1}-X^{\sigma+1}
-X^\sigma+X \in \F_{q}[X],
\end{equation}
then  the subspace $\cZ_{\lambda,\sigma}$ of $\F_{q^m}^3$ is scattered.
\end{proposition}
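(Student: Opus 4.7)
The plan is to close the loop on the computation already carried out in the paragraphs leading up to the statement, arguing by contraposition: assuming $\cZ_{\lambda,\sigma}$ fails to be scattered for some $\lambda \in \F_{q^m}\setminus \F_q$, I will show $Q(\lambda)=0$. From the reduction made earlier, $\cZ_{\lambda,\sigma}$ is scattered if and only if for every choice of $\bar{v}=(\bx,\bxqs+\ba,\bxxs+\bb)\in\cZ_{\lambda,\sigma}$ with $\bx\neq 0$, the $\sigma$-linearized polynomial in $y$ appearing in \eqref{equazioney} admits at most $q$ roots in $\F_{q^m}$. Since its $\sigma$-degree is at most one, this polynomial has either exactly $1$ root, exactly $q$ roots, or all of $\F_{q^m}$ as its root set; more than $q$ roots therefore forces both coefficients to vanish simultaneously, which is precisely system \eqref{coeffy}.

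Next, I exploit system \eqref{coeffy} to extract a constraint on $\lambda$ alone. From its first equation one gets
\[
\frac{\bxxs+\bb}{\bx}=\lambda(\lambdass+1)-\lambda=\lambda^{\sigma^2+1}.
\]
Applying $\sigma$ to both sides, and using that $\bb\in\F_q$ so $\bb^\sigma=\bb$, yields
\[
\frac{\bxxxs+\bb}{\bxqs}=\lambda^{\sigma^3+\sigma}.
\]
Substituting this expression into the second equation of \eqref{coeffy} and expanding $(\lambdass+1)(\lambdas+1)$ gives
\[
\lambda^{\sigma^2+\sigma}+\lambda^{\sigma^2}-\lambdas=\lambda^{\sigma^3+\sigma},
\]
equivalently
\[
\lambda^{\sigma^3+\sigma}-\lambda^{\sigma^2+\sigma}-\lambda^{\sigma^2}+\lambdas=0.
\]
The left-hand side is $Q(\lambda)^\sigma$; since $\sigma$ is an $\F_q$-automorphism of $\F_{q^m}$, this is equivalent to $Q(\lambda)=0$. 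Hence, whenever $\lambda$ is not a root of $Q$, the coefficients of \eqref{equazioney} cannot both vanish, so the polynomial has at most $q$ roots for every admissible $\bar v$, and $\cZ_{\lambda,\sigma}$ is scattered by criterion $(\diamond)$.

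There is not really a delicate step here: the substitutions are already written out in the preceding text, and the only genuine observation is that a $\sigma$-linearized polynomial of $\sigma$-degree at most $1$ has at most $q$ roots unless it is identically zero, combined with recognising the derived relation as $Q(\lambda)^\sigma$. The only thing that requires some care is pointing out that $\bar b^\sigma = \bar b$ is what makes the $\sigma$-image of the first equation of \eqref{coeffy} clean enough to be substituted into the second.
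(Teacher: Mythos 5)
Your argument is correct and is essentially the paper's own: the paper proves this proposition implicitly through the computation preceding its statement (deriving \eqref{equazioney}, noting that a nonzero $\sigma$-linearized polynomial of $\sigma$-degree $1$ has at most $q$ roots, and extracting $Q(\lambda)=0$ from \eqref{coeffy} via the $\sigma$-twist), and you have simply made that chain explicit in contrapositive form. The only nitpick is that the final appeal should be to the criterion ``System \eqref{sistema} has at most $q$ solutions for every $\bar v$'' rather than to $(\diamond)$, which concerns the scatteredness of $\cU_\sigma$ itself.
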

Summing up the results so far, by Lemma \ref{local-scatt}, we get
\begin{theorem}\label{MainTheorem}
Let $q=p^e$ and $m \geq 5$. Consider the $(m+2)$-dimensional subspace
\begin{equation*}
    \cU_\sigma=\{(x,\xqs+a,\xxs +b):x \in\fqm,a,b\in\fq\}
\end{equation*}
of $\F_{q^m}^3$, where $\sigma: x \in \F_{q^m} \longrightarrow x^{q^s} \in \F_{q^m}$, $1 \leq s \leq m-1$ and  $\gcd(s,m)=1$.
If 
\begin{itemize}
\item [$i)$] $\gcd(q-1,m)=1$, 
\item [$ii)$] $p$ does not divide $m$,
\item [$iii)$] the polynomial
\begin{equation*}
 Q(X)=   X^{\sigma^2+1}-X^{\sigma+1}
-X^\sigma+X \in \F_{q}[X]
\end{equation*}
has no roots in $\F_{q^m} \setminus \F_q$,
\end{itemize}
then $\mathcal{U}_\sigma$ is scattered.
\end{theorem}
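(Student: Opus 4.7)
The plan is to assemble the three reductions that have been developed in the excerpt and apply them under the three hypotheses. By the equivalence $(\diamond)$ stated just before the two propositions, it is enough to show that every $\cZ_{\lambda,\sigma}$ is scattered as $\lambda$ ranges over $\F_{q^m}\cup\{\infty\}$, and this naturally splits into three cases according to whether $\lambda=\infty$, $\lambda\in\F_q$, or $\lambda\in\F_{q^m}\setminus\F_q$.

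First I would dispose of the trivial cases. Since $\cU_\sigma = \cW_\sigma\oplus\cZ_{\infty,\sigma}$ with $\cW_\sigma=\{(x,\xqs,\xxs):x\in\fqm\}$ and $\cZ_{\infty,\sigma}=\langle(0,0,1),(0,1,0)\rangle_{\F_q}$, both being obviously scattered $\F_q$-subspaces, the case $\lambda=\infty$ requires no further work. This observation has already been recorded in the paragraph preceding the first proposition and will be cited directly.

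Next, I would invoke the first proposition to cover the case $\lambda\in\F_q$: under hypotheses $(i)$ $(q-1,m)=1$ and $(ii)$ $p\nmid m$, that proposition guarantees $\cZ_{\lambda,\sigma}$ is scattered for every $\lambda\in\F_q$. Then, for $\lambda\in\F_{q^m}\setminus\F_q$, I would apply the second proposition: it shows that if $\lambda$ is not a root of
\[
Q(X)=X^{\sigma^{2}+1}-X^{\sigma+1}-X^{\sigma}+X,
\]
then $\cZ_{\lambda,\sigma}$ is scattered. Under hypothesis $(iii)$, no element of $\F_{q^m}\setminus\F_q$ is a root of $Q(X)$, so the proposition applies to every such $\lambda$.

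Combining these three facts through $(\diamond)$ yields that $\cU_\sigma$ is scattered. There is essentially no new technical obstacle at this stage, as the substantive work (ruling out $q^2$ solutions of the relevant $\sigma$-linearized equations via the companion matrix $A_\lambda$ in the $\F_q$-case, and via the two-equation system \eqref{coeffy} yielding $Q(X)$ in the $\F_{q^m}\setminus\F_q$-case) has already been carried out in the two preceding propositions. The proof of the Main Theorem therefore reduces to a clean synthesis, and the only thing worth stating explicitly is that the hypotheses $(i)$--$(iii)$ are precisely what is needed to cover, respectively, the subcases $\lambda=1$ inside $\F_q$, the subcase $\lambda=0$ together with the general $\F_q^{*}$ case, and all of $\F_{q^m}\setminus\F_q$.
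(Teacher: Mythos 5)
Your proof is correct and matches the paper's own argument, which is literally just ``Summing up the results so far, by $(\diamond)$'' --- i.e.\ the theorem is obtained exactly as you describe, by combining the $\lambda=\infty$ observation with the two preceding propositions. (Only your final parenthetical slightly misattributes the roles of $(i)$ and $(ii)$: in the paper $(i)$ reduces the problematic $\lambda\in\F_q^*$ case to $\lambda=1$, and $(ii)$ then disposes of $\lambda=1$; this does not affect the validity of the synthesis.)
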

\begin{corollary}
    Let $q=2$, $s=1$, and $\gcd(m,3)=1$, $m\geq 5$ odd. The subspace $\cU_\sigma$ defined in \eqref{stsubspace} is scattered.
\end{corollary}
\begin{proof}
The hypotheses $i)$ and $ii)$ of Theorem  \ref{MainTheorem} are trivially satisfied and $iii)$ asks that the polynomial $X^5+X^3+X^2+X=X(X+1)(X^3+X^2+1)$  has no roots in $\F_{2^m}\setminus\F_2$. This is equivalent to say that the polynomial $X^3+X^2+1$ has no roots in $\F_{2^m}$. Since the latter polynomial has degree $3$, has no roots in $\F_2$ and $\gcd(m,3)=1$, it also has no roots in $\F_{2^m}$. This concludes the proof.
\end{proof}
\begin{corollary}
    Let $q=3$, $s=1$ and $\gcd(m,12)=1$, $m\geq 5$. The subspace $\cU_\sigma$ defined in \eqref{stsubspace} is scattered.
\end{corollary}
\begin{proof}
The hypotheses $i)$ and $ii)$ of Theorem  \ref{MainTheorem} are trivially satisfied and $iii)$ asks that the polynomial $X^{10}-X^4-X^3+X=X(X+1)(X-2)(X^3+ 2X^2 + 2X + 2)(X^4+X^3+2X+1)$ has no roots $\F_{3^m}\setminus\F_3$, or equivalently, the polynomials $X^3+ 2X^2 + 2X + 2$ and $X^4+X^3+2X+1$, have no roots in $\F_{3^m}$. Since the latter polynomials are irreducible and have degrees $3$ and $4$, they have no roots in $\F_3$ and $\gcd(m,12)=1$, they also have no roots in $\F_{3^m}$. This concludes the proof.
\end{proof}
\begin{remark}
Along the lines of the corollaries above, once $q$ and $s$ are fixed, one could easily find other conditions on $m$, to guarantee $\cU_\sigma$ to be scattered.
\end{remark}

Note that each element of $\F_q$ is a root of the polynomial equation $Q(X)=0$; also, $Q(x)$ factorizes as follows:
\begin{equation}\label{eqfattorizzata}
Q(X)=Q_1(X)Q_2(X),
\end{equation}
where $Q_1(X)=X^{\sigma}-X$ and $Q_2(X)=X(X^{\sigma}-X)^{\sigma-1}-1$.
The polynomial $Q_2(X)$ has degree $q^{2s}-q^s+1$ with $\gcd(s,m)=1$. 

Taking into account the factorization expressed in (\ref{eqfattorizzata}), one sees that Condition $(iii)$ of Theorem \ref{MainTheorem} is equivalent to require that none of the roots of 
   $$X(X^{\sigma}-X)^{\sigma-1}-1 \in \F_q[X]$$
belong to $\fqm$. In different words it can be expressed as follows:
\begin{itemize}
 \item  [$iii)'$]  $\gcd(r,m)=1$ where $r$ is the degree of the splitting field  over $\F_q$ of the polynomial
   $$X(X^{\sigma}-X)^{\sigma-1}-1 \in \F_q[X]$$
\end{itemize}

 Recall that a polynomial $f(X)$ of degree $n$ with coefficients over a field $\F$  has splitting field $\mathbb{K}$ of degree at most $n!$ over $\F$. 
Then we get the following
\begin{corollary}
\label{cor:fattoriale} Let $\cU_\sigma$ be the $(m+2)$-dimensional subspace defined in \eqref{stsubspace}. If $\gcd(m,(q^{2s}-q^{s}+1)!)=1,$ then $\cU_\sigma$ is scattered.
\end{corollary}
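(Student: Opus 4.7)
The plan is to derive all three hypotheses of Theorem~\ref{MainTheorem} from the single arithmetic assumption $\gcd(m,(q^{2s}-q^s+1)!)=1$. Hypotheses $(i)$ and $(ii)$ should reduce to easy size comparisons, while hypothesis $(iii)$ will be obtained through its equivalent reformulation $(iii)'$, which the excerpt has already recorded as a consequence of the factorization $Q(X)=Q_1(X)Q_2(X)$ in \eqref{eqfattorizzata}.

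For $(i)$ and $(ii)$ I would argue by contradiction. Any prime $\ell$ dividing $\gcd(q-1,m)$ satisfies $\ell\leq q-1\leq q^{2s}-q^s+1$, so $\ell\mid(q^{2s}-q^s+1)!$ and would thus be a common prime factor of $m$ and $(q^{2s}-q^s+1)!$; similarly, the characteristic $p$ divides $q$, so $p\leq q^{2s}-q^s+1$, and $p\mid m$ would contradict the hypothesis. Both checks are pure bookkeeping.

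The substantive step is $(iii)'$: the splitting field $\F_{q^r}$ of $Q_2(X)=X(X^\sigma-X)^{\sigma-1}-1$ over $\F_q$ must satisfy $(m,r)=1$. Here I would invoke the classical fact that the Galois group of the splitting field of a degree-$n$ polynomial embeds in $S_n$, so that $r$ divides $n!$. Applied to $Q_2$, whose ordinary degree (not $\sigma$-degree) is $q^{2s}-q^s+1$, this gives $r\mid(q^{2s}-q^s+1)!$, and combining with the hypothesis yields $(m,r)=1$ immediately. As a sanity check one can re-derive the direct form $(iii)$ from this: any root $\alpha$ of $Q_2$ in $\F_{q^m}$ would have degree $d$ over $\F_q$ with $d\mid (m,r)=1$, forcing $\alpha\in\F_q$, but the substitution $x\in\F_q$ yields $(x^\sigma-x)=0$ and hence $Q_2(x)=-1\neq 0$, a contradiction.

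With $(i)$, $(ii)$, $(iii)'$ all in hand, Theorem~\ref{MainTheorem} applies and concludes that $\cU_\sigma$ is scattered. The only step with any real content is the $r\mid n!$ bound from Galois theory; everything else is arithmetic bookkeeping. The one small point that deserves attention—and the closest thing to an obstacle—is tracking that what enters the factorial is the \emph{ordinary} degree of the polynomial $X(X^\sigma-X)^{\sigma-1}-1$, namely $q^{2s}-q^s+1$, rather than its $\sigma$-degree $\sigma^2-\sigma+1$.
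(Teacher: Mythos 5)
Your proposal is correct and follows essentially the same route as the paper: hypotheses $(i)$ and $(ii)$ of Theorem~\ref{MainTheorem} by size comparison with $q^{2s}-q^s+1$, and hypothesis $(iii)$ via the reformulation $(iii)'$ together with the fact that the splitting-field degree $r$ of $X(X^\sigma-X)^{\sigma-1}-1$, a polynomial of ordinary degree $q^{2s}-q^s+1$, divides $(q^{2s}-q^s+1)!$. If anything, your explicit use of the divisibility $r\mid n!$ (rather than the mere bound $r\leq n!$) is the precise form of the argument the paper only sketches.
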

\begin{proof}
   Since  $\gcd(m,(q^{2s}-q^{s}+1)!)=1$, $m$ is coprime with any integer  less than or equal to $(q^{2s}-q^{s}+1)!$. Hence, the hypotheses $i)$, $ii)$ of Theorem \ref{MainTheorem}  and $iii)'$ are trivially satisfied.
\end{proof}

Corollary \ref{cor:fattoriale} above provide us with a rough condition on the couples $(m,q)$ assuring that $\cU_\sigma$ is scattered.\\
In the remainder of this section, we will determine when the polynomial $Q(X)$ has no roots in $\fqm \setminus \fq$. For this purpose,  using the results and notations in \cite{SheekeyMcGuire2019}, we prove the following

\begin{proposition}\label{numberroots}
Let $m \geq 5$ be an odd integer, $\sigma: x \in \F_{q^m} \longrightarrow x^{q^s} \in \F_{q^m}$ with $\gcd(m,s)=1$.  The polynomial
\begin{equation*}
Q(X)=X^{\sigma^2+1}-X^{\sigma+1}-X^\sigma+X \in \F_{q}[X]
\end{equation*}
has  exactly the elements of $\F_q$ as roots in $\F_{q^m}$ if and only if the set 
$$\{\lambda \in\fqm\mid P_\gamma(\lambda)=0 \textnormal{ for some }\gamma\in\fq\}$$ has size at most $q$, where 
$P_\gamma(X)=X^{\sigma+1}-\gamma X + \gamma \in \F_q[X]$
with $\gamma \in \F_q$;   namely, if and only if $G_{m-1}(\gamma) \not = 0$ for any $\gamma \in \F^*_q$. 
\end{proposition}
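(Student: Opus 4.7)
The plan is to split the proposition into (A) $\Leftrightarrow$ (B) and (B) $\Leftrightarrow$ (C), each handled by a short but careful argument.

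For (A) $\Leftrightarrow$ (B), I would start from the factorisation $Q = Q_1 \cdot Q_2$ of \eqref{eqfattorizzata}. Since $(s,m)=1$, the roots of $Q_1(X)=X^\sigma-X$ in $\F_{q^m}$ are exactly $\F_q$, and $Q_2(\lambda)=\lambda \cdot 0^{\sigma-1}-1 = -1$ for every $\lambda\in\F_q$; hence (A) is equivalent to $Q_2$ having no root in $\F_{q^m}$ whatsoever. To link this with $|S|$, note that $\lambda \in S$ iff $\lambda \neq 1$ and $\gamma := \lambda^{\sigma+1}/(\lambda - 1)$ belongs to $\F_q$ (the case $\lambda = 0$ being absorbed by $\gamma = 0$). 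Imposing $\gamma^\sigma = \gamma$, clearing denominators, dividing by $\lambda^{\sigma+1}$ and multiplying by $\lambda$ gives $\lambda^{\sigma^2+1} - \lambda^{\sigma^2} - \lambda^{\sigma+1} + \lambda = 0$, and a direct Frobenius expansion (using $(X-1)^{\sigma^k} = X^{\sigma^k} - 1$ in characteristic $p$) identifies this with $Q(\lambda - 1) = 0$. Since $Q(0) = 0$ and $1 \notin S$, the translation $\lambda \mapsto \lambda - 1$ yields $|S| = |\{\text{roots of }Q\text{ in }\F_{q^m}\}| - 1$. Because $Q_2 \in \F_q[X]$ has no $\F_q$-root, its roots in $\F_{q^m}$ lie in $\mathrm{Gal}(\F_{q^m}/\F_q)$-orbits of size $\geq 2$, so $|\{\text{roots of } Q_2\}| \neq 1$; combined with $|\{\text{roots of }Q\}| = q + |\{\text{roots of }Q_2\}|$, the bound $|S| \leq q$ forces $|\{\text{roots of }Q_2\}| = 0$, which is (A).

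For (B) $\Leftrightarrow$ (C), fix $\gamma \in \F_q^*$ and consider $L_\gamma(Y) = \gamma Y - \gamma Y^\sigma + Y^{\sigma^2}$. Its companion $C_{L_\gamma}$, which has characteristic polynomial $X^2 - \gamma X + \gamma$, lies over $\F_q$; since $\sigma$ fixes $\F_q$, $A_{L_\gamma} = C_{L_\gamma}^m$. Cayley--Hamilton gives $C_{L_\gamma}^k = a_k C_{L_\gamma} + b_k I$ with $a_k, b_k \in \F_q$ satisfying $a_0 = 0$, $b_0 = 1$, $a_{k+1} = \gamma a_k + b_k$, $b_{k+1} = -\gamma a_k$; comparing the matrix $a_m C_{L_\gamma} + b_m I$ with the expression in \eqref{matrixAL} specialised to $u = 1/\gamma$ identifies $a_m = \gamma^{m-1} G_{m-1}(\gamma)$. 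Hence $G_{m-1}(\gamma) = 0$ iff $a_m = 0$ iff $A_{L_\gamma} = b_m I$ is scalar. In the scalar case Theorem \ref{eigen} gives $|\{\text{roots of } P_\gamma \text{ in } \F_{q^m}\}| = q + 1$, while $P_\gamma$ restricted to $\F_q$ is the quadratic $X^2 - \gamma X + \gamma$ with at most $2$ roots in $\F_q$, so at least $q - 1 \geq 1$ roots of $P_\gamma$ lie outside $\F_q$. In the non-scalar case the affine map $\mu \mapsto a_m \mu + b_m$ is an $\F_q$-bijection between the eigenvalues of $C_{L_\gamma}$ and those of $A_{L_\gamma}$, so the count $|\{\text{roots of } P_\gamma\}| \in \{0, 1, 2\}$ from Theorem \ref{eigen} equals the number of $\F_q$-roots of $X^2 - \gamma X + \gamma$; each such root $\mu \in \F_q$ is automatically a root of $P_\gamma$ (since $\mu^{\sigma+1} = \mu^2$ for $\mu \in \F_q$), so all roots of $P_\gamma$ in $\F_{q^m}$ lie in $\F_q$. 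Using $|S| = 1 + \sum_{\gamma \in \F_q^*} |\{\text{roots of }P_\gamma\}|$ together with the fibre identity $\sum_{\gamma \in \F_q^*} |\{\text{roots of } P_\gamma\} \cap \F_q| = q - 2$ (obtained by running $\lambda \mapsto \lambda^{\sigma+1}/(\lambda - 1)$ over $\lambda \in \F_q^*\setminus\{1\}$), the condition $|S| \leq q$ is equivalent to the total outer contribution being zero, which is exactly (C).

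The delicate part is the non-scalar case above: reconciling Theorem \ref{eigen}'s count with the $\F_q$-root count of $X^2 - \gamma X + \gamma$ requires both the Cayley--Hamilton identity $A_{L_\gamma} = a_m C_{L_\gamma} + b_m I$ --- which reduces the eigenstructure of $A_{L_\gamma}$ to that of the $\F_q$-matrix $C_{L_\gamma}$ --- and the explicit verification that each $\F_q$-root of the quadratic is a root of $P_\gamma$. Together these saturate the count and leave no room for outer roots when $G_{m-1}(\gamma) \neq 0$, yielding the claimed chain of equivalences.
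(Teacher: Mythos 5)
Your proof is correct, and it reaches the conclusion by a genuinely different route from the paper's. The paper only proves the inequality $|\{\lambda\in\F_{q^m}:Q(\lambda)=0\}|\le\sum_{\gamma\in\F_q^*}|R'_\gamma|+2$ and then evaluates $\sum_{\gamma}|R'_\gamma|$ by a characteristic-dependent case analysis: for $q$ even via $\mathrm{Tr}(A_\gamma)=\gamma^mG_{m-1}(\gamma)$ and an Artin--Schreier trace count, for $q$ odd via the discriminant $\Delta_\gamma=\gamma^{2m-1}(\gamma-4)G_{m-1}^2(\gamma)$ and a point count on the conic $\xi^2-4\xi=\eta^2$. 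You replace all of this with two structural observations: (i) the exact bijection $\lambda\mapsto\lambda-1$ between $S\setminus\{0\}$ and the roots of $Q$ other than $0,-1$, which upgrades the paper's inequality to the equality $|S|=|\{\text{roots of }Q\}|-1$, combined with the parity remark that the $\F_{q^m}$-roots of $Q_2\in\F_q[X]$ come in Galois orbits of size at least $2$; and (ii) the Cayley--Hamilton identity $A_{L_\gamma}=C_{L_\gamma}^m=a_mC_{L_\gamma}+b_mI$ over $\F_q$, which shows that whenever $a_m=\gamma^{m-1}G_{m-1}(\gamma)\neq 0$ every root of $P_\gamma$ in $\F_{q^m}$ already lies in $\F_q$, so that $\sum_\gamma|R'_\gamma|=q-2$ drops out of the trivial fibration of $\F_q\setminus\{0,1\}$ over $\F_q^*$. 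Your argument is characteristic-free and shorter; what the paper's longer computation buys is the explicit trace and discriminant formulas that are reused in Subsections \ref{subsec:Case5} and \ref{subsec:Case7} to settle the cases $m=5,7$, and incidentally your proof also establishes the equivalence with the condition $|S|\le q$, which the paper's proof never really addresses. Two points you should make explicit. First, reading $a_m=\gamma^{m-1}G_{m-1}(\gamma)$ off \eqref{matrixAL} uses $\mathrm{N}_{q^m/q}(-\gamma)=-\gamma^m$, i.e.\ that $m$ is odd (for even $m$ the sign flips); since only the equivalence $a_m=0\Leftrightarrow G_{m-1}(\gamma)=0$ is used, this is harmless but worth a word. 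Second, in the closing step, $|S|\le q$ is equivalent to the outer contribution being at most $1$, not to its being $0$; to rule out the value $1$ you need either that a scalar $A_{L_\gamma}$ forces at least $q-1\ge 2$ roots of $P_\gamma$ outside $\F_q$ (which fails to be conclusive only for $q=2$) or, uniformly, the Galois-orbit parity of the roots of $Q_2$ from your first paragraph. The ingredient is already in your write-up, but the two halves must be stitched together at this point.
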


\begin{proof}
   Suppose $G_{m-1}(\gamma) \not =0$ for any $\gamma \in \F^*_q$,  let $\overline{\mathbb{F}}_q$ be the algebraic closure of $\mathbb{F}_q$ and $\lambda \in \overline{\mathbb{F}}_q$ be a root of $Q(X)$. The expression
    \begin{equation*}
    \lambda^{\sigma^2+1}-\lambda^{\sigma+1}
-\lambda^\sigma+\lambda=0
\end{equation*}
can be rewritten as
\begin{equation*}
\lambda(\lambda+1)^{\sigma^2}-\lambda^\sigma(\lambda+1)=0.
\end{equation*}
Assuming $\lambda\neq0, -1$, this leads to
\begin{equation*}
(\lambda+1)^{\sigma^2-1}=\lambda^{\sigma-1},
\end{equation*}
namely
\begin{equation*}
\left (\frac{(\lambda+1)^{\sigma+1}}{\lambda}\right )^{\sigma-1}=1.
\end{equation*}
This is the case if and only if $\frac{(\lambda+1)^{\sigma+1}}{\lambda}=\gamma$ for some $\gamma\in\fq^*$. Hence, 
\begin{equation}\label{withgamma}
\lambda^{\sigma+1}+\lambda^\sigma+(1-\gamma)\lambda+1=0 \textnormal{\quad for } \gamma \in \F_q^*
\end{equation}
holds. Applying to \eqref{withgamma}  the substitution  $\lambda \mapsto \lambda-1$, one gets the equation
\begin{equation*}
    \lambda^{\sigma+1}-\gamma\lambda+\gamma=0 \quad \textnormal{ for } \gamma \in \F^*_{q}.
\end{equation*}
For any $\gamma \in \F_q$, define 
\begin{equation*}
    R_\gamma= \{ \lambda \in \F_{q^m} : \lambda^{\sigma+1}+\lambda^\sigma+(1-\gamma)\lambda+1=0 \} \, \textnormal{ and }\,   R'_\gamma= \{ \lambda \in \F_{q^m} : \lambda^{\sigma+1}-\gamma\lambda+\gamma=0\},
\end{equation*} and note that $0 \not \in R_\gamma$ for any $\gamma \in \F_q$ and so $1 \not \in R'_\gamma$ for any $\gamma \in \F_q$. Also, $-1 \in R_{0}$.

Then
\begin{equation}\label{containment}
 | \{ \lambda \in \F_{q^m} \colon  Q(\lambda)=0 \} |  \leq 
\biggl \vert
 \bigcup_{\gamma \in \F_q} R_\gamma
 \biggr  \vert +1 =  \biggl \vert  \bigcup_{\gamma \in \F_q} R'_\gamma \biggr \vert +1= \sum_{\gamma \in \F^*_q} \vert R'_\gamma \vert + 2.
\end{equation}
where the last equality follows because the sets $R'_{\gamma}$
's, with $\gamma \in \F_{q}$, are pairwise disjoint.

Now, let $L_\gamma(X)=X^{\sigma^2}-\gamma X^\sigma + \gamma X$ be the linearized polynomial associated with $P_\gamma(X)$ and consider $A_\gamma=A_{L_\gamma}$ the $m$-th power of companion matrix 
\begin{equation*}
C_\gamma=\begin{pmatrix}
0 & -\gamma \\
1 & \gamma
\end{pmatrix}
\end{equation*}
of $L_\gamma(X)$. We shall distinguish two cases depending on the field characteristic:
\begin{itemize}
\item [-] Case $q$ even. By hypothesis and by Equation \eqref{eq:description_Tr(A)} (where $A_L$ is now $A_\gamma$), one gets $\mathrm{Tr}(A_\gamma)=\gamma^mG_{m-1}(\gamma)\not =0$ for any $\gamma \in \F^*_q$, then the algebraic multiplicity of the eigenvalues of $A_\gamma$ is $1$. Hence, as a consequence of Theorem \ref{eigen}, the size of any $R'_\gamma$ ($\gamma \in \F^*_q$) is either $2$ or $0$, depending on whether $T(\gamma)=\mathrm{Tr}_{q/2}(\Lambda(\gamma))$ is $0$ or $1$, respectively, with
\begin{equation*}
\Lambda(\gamma)=\frac{\det(A_\gamma)}{\mathrm{Tr}(A_\gamma)^2}=\frac{1}{\gamma^m G^2_{m-1}(\gamma)}.
\end{equation*}
Hence
\begin{equation*}
    \sum_{\gamma \in \F^*_q} |R'_\gamma|+2 =  \sum_{\substack{\gamma \in \F^*_q\\ T(\gamma)=0}}|R'_\gamma|+ \sum_{\substack{\gamma \in \F^*_q\\ T(\gamma)=1}}|R'_\gamma|+2 \leq 2(2^{e-1}-1)+2=q.
    \end{equation*} 
    Since $Q(X)$ in \eqref{polynomial} has at least the elements of $\F_q$ as roots, the claim follows.
\item [-]  Case $q$ odd. By \cite{SheekeyMcGuire2019}, $G_{m-1}(\gamma) \not = 0$ for any $\gamma \in \F^*_q$, if and only if the size of any $R'_\gamma$ is 0,1, or 2, depending on whether the discriminant $\Delta_\gamma$ of the characteristic polynomial of $A_\gamma$ is a non-square, zero or a nonzero square of $\F_q$, respectively. By \cite[Proposition 4]{SheekeyMcGuire2019} and since $G_{k}(\gamma) \in \F_{q}$ for $k\in\{0,\ldots,m\}$, one gets
\begin{equation}
    \begin{aligned}
\Delta_\gamma=&\mathrm{Tr}(A_\gamma)^2-4\det(A_\gamma)\\
&=\gamma^{2m}\Bigl [ (2G_{m} (\gamma)+G_{m-1}(\gamma))^2-4 \Bigl (G^2_m(\gamma)+G_{m-1}(\gamma)G_{m}(\gamma)+\frac{G_{m-1}^2(\gamma)}{\gamma} \Bigr )  \Bigr]\\
&=\gamma^{2m-1}(\gamma-4)G^2_{m-1}(\gamma)
\end{aligned}
    \end{equation}
Since $m=2r+1$ for some integer $r \geq 2$, 
\begin{equation}
\Delta_\gamma=\gamma^{4r+1}(\gamma-4)G_{m-1}^2(\gamma)=\gamma(\gamma-4)(\gamma^{2r}G_{m-1}(\gamma))^2.
\end{equation}
So, by assumption, the estimate in \eqref{containment} becomes
\begin{equation}\label{Rodd}
\begin{aligned}
    \vert \{\lambda \in \F_{q^m} \colon Q(\lambda)=0 \}\vert &\leq \sum_{\gamma \in \F^*_q} |R'_\gamma|+2 =  \sum_{\substack{\gamma \in \F^*_q\\ \Delta_\gamma=0}}|R'_\gamma|+ \sum_{\substack{\gamma \in \F^*_q\\ \Delta_\gamma \not \in \square}}|R'_\gamma|+ \sum_{\substack{\gamma \in \F^*_q\\ \Delta_\gamma \in \square}}|R'_\gamma|+2\\
    &=|R'_4|+ \sum_{\substack{\gamma \in \F^*_q\\ \Delta_\gamma \not \in \square}}|R'_\gamma|+ \sum_{\substack{\gamma \in \F^*_q\\ \Delta_\gamma \in \square}}|R'_\gamma|+2.
    \end{aligned}
    \end{equation}
 Since $\Delta_\gamma \in \square$ if and only if $\gamma(\gamma-4) \in \square$, in order to estimate \eqref{Rodd}, we have to compute the size of the set
\begin{equation}\label{square}
    \{(\gamma, \gamma(\gamma-4)) \in \F^2_q \setminus \{(0,0)\}   \colon \gamma(\gamma-4)) \in \square\}.
\end{equation}
Let  
\begin{equation*}
    \Gamma:\xi^2-4\xi=\eta^2
\end{equation*}
be a 
hyperbola in the  affine plane $\mathrm{AG}(2,q)$ whose points have coordinates $(\xi,\eta)$. Since for any choice of $\gamma \in \F_q$ with $\gamma \not =0,4$, there exists two of points of $\Gamma$ with coordinates $(\gamma,\pm \delta)$, the size of the set in \eqref{square} is $(q-3)/2$.
So, Formula \eqref{Rodd} becomes

\begin{equation*}
\begin{aligned}
  &|R'_4|+ \sum_{\substack{\gamma \in \F^*_q\\ \Delta(\gamma) \not \in \square}}|R'_\gamma|+ \sum_{\substack{\gamma \in \F^*_q\\ \Delta(\gamma) \in \square}}|R'_\gamma|+2\\
    & =1+ 2 \cdot\frac{q-3}{2} +2= q.
    \end{aligned}
    \end{equation*}
Since $Q(X)$ has at least the elements of $\F_q$ as roots, the claim follows.
\end{itemize}

Vice versa, suppose that there exist $\overline{\gamma} \in \F^*_q$ such that $G_{m-1}(\overline{\gamma})=0$. Then, by  \cite[Theorem 8]{SheekeyMcGuire2019}, the projective polynomial 
\begin{equation*}
    X^{\sigma+1}-\overline{\gamma}X+\overline{\gamma}=0 
\end{equation*}
and so \begin{equation*}
X^{\sigma+1}+X^\sigma+(1-\overline{\gamma})X+1=0
\end{equation*}
has $q+1$ solutions, whence the polynomial $Q(X)$ has at least a root not beloging to $\F_q$.
\end{proof}

\begin{remark}
Note that in even characteristic, one obtains an explicit formula 
for $G_{m-1}(\gamma)$ with $\gamma \in \F^*_q$. Indeed,
$$G_{m-1}(\gamma)=\frac{\mathrm{Tr}(A_\gamma)}{\gamma^m},$$
and  by \cite[Corollary 4.1]{patil2021trace}, since  $A_\gamma=C_\gamma^m$, one has that 
$$G_{m-1}(\gamma)=\sum^{\lfloor m/2 \rfloor}_{j=0}\frac{m(m-j-1)!}{j!(m-2j)!}\frac1{\gamma^j}.$$

Instead, in odd characteristic, it is rather demanding in general to figure out for which couples $(q,m)$, $G_{m-1}(\gamma)$ is not zero for any $\gamma \in \F^*_{q}$, because of the recursive Formula (\ref{eq:recursion}) defining $G_{m-1}(\gamma)$. 
\end{remark}

However, taking into account Theorem \ref{MainTheorem} and Proposition \ref{numberroots}, we have 
\begin{corollary}\label{gscattered} Let $q = p^e$, $m \geq 5$, $\gcd(q-1,m)=1$ and $p \nmid m$. Consider the $(m + 2)$-dimensional
subspace
$$\mathcal{U}_{\sigma} = \{(x, x^\sigma + a, x^{\sigma^2}
+ b) : x  \in \F_{q^m}, a, b \in \F_q\}$$
of $\F_{q^m}^3$ , where $\sigma : x \in \F_{q^m} \longrightarrow x^{q^s}
\in \F_{q^m}$ , $1 \leq  s \leq m -1$ and $\gcd(s, m) = 1$. If $G_{m-1}(\gamma)\not =0$ for any $\gamma \in \F^*_{q^m}$, then $\mathcal{U}_\sigma$ is scattered.
\end{corollary}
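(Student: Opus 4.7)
The proof plan is essentially a direct assembly of the two preceding results. The corollary's statement explicitly includes hypotheses (i) $(q-1,m)=1$ and (ii) $p \nmid m$ of Theorem \ref{MainTheorem}, so these require no additional verification. The only remaining task is to check condition (iii), namely that the polynomial
\[
Q(X) = X^{\sigma^{2}+1} - X^{\sigma+1} - X^{\sigma} + X \in \F_q[X]
\]
has no root in $\F_{q^m} \setminus \F_q$.

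To establish this, I would invoke Proposition \ref{numberroots} directly. That proposition states that the set of roots of $Q(X)$ in $\F_{q^m}$ coincides with $\F_q$ if and only if $G_{m-1}(\gamma) \neq 0$ for every $\gamma \in \F_q^\ast$. This non-vanishing condition is precisely what is being assumed in the corollary (the assumption is stated for $\gamma \in \F_{q^m}^\ast$, which a fortiori implies the same for $\gamma \in \F_q^\ast$, so even the weaker form suffices). Hence no element of $\F_{q^m} \setminus \F_q$ is a root of $Q(X)$, verifying condition (iii).

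With all three hypotheses of Theorem \ref{MainTheorem} satisfied, the theorem yields that $\mathcal{U}_\sigma$ is scattered, completing the proof. Since the work has already been carried out in Proposition \ref{numberroots} (where the actual combinatorial counting via the companion-matrix eigenspaces and the $G_k(\gamma)$ recursion takes place) and in Theorem \ref{MainTheorem} (where the scatteredness of $\mathcal{U}_\sigma$ is reduced to the non-existence of roots of $Q(X)$ outside $\F_q$), there is no substantive obstacle in the corollary itself; it is simply the clean statement obtained by chaining the two previous results together.
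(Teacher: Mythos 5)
Your proposal is correct and matches the paper exactly: the paper offers no separate proof, presenting the corollary as an immediate consequence of chaining Proposition \ref{numberroots} (to verify condition $(iii)$) with Theorem \ref{MainTheorem}. Your side remark that the hypothesis stated for $\gamma\in\F_{q^m}^*$ a fortiori covers the needed case $\gamma\in\F_q^*$ is also apt, since the proposition only requires non-vanishing on $\F_q^*$.
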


In the following, we will restrict to $m \in \{5,7\},$ in which case we will get arithmetic conditions on $q$ and $m$ such that $\mathcal{U}_\sigma$ is scattered.

%\footnote{Per $m=4$, per quelle considerazioni fatte è inutile studiarle. Let $m=4$. From \cite{SheekeyMcGuire2019}, this amounts to say that, for any $\gamma\in\fq$, $G_3\neq 0$, namely
%$$\frac{\gamma^2+3\gamma+1}%{\gamma^2}\neq 0.$$}

\subsection{Case \texorpdfstring{$m=5$}{}} \label{subsec:Case5}
Note that for $m=5$, $m+2$ is equal to $\lfloor{\frac{3m}{2}}\rfloor$ and by \cite{blokhuis2000scattered} this is the maximal possible dimension of a scattered subspace in $\F^3_{q^5}$. 

Let 
\begin{equation*}
X^{\sigma+1}-\gamma X + \gamma \in \F_q[X]
\end{equation*}
with $\gamma \in \F^*_{q}$,
By the table in \cite[Subsection 4.4]{SheekeyMcGuire2019}, we get
\begin{equation*}
    G_4(\gamma)=\frac{\gamma^2-3\gamma+1}{\gamma^2}.
\end{equation*}

\begin{itemize}
    \item [-] \textit{Case $q$ even.} By Proposition \ref{numberroots}, Condition $iii)$ of Theorem \ref{MainTheorem} is satisfied if and only if  $G_{4}(\gamma)\not= 0$ for any $\gamma \in \F^*_{q}$ and hence $q=2^e$ with $e$ an odd integer. Since  $2^e \equiv 2\cdot 4^h\equiv 2 \cdot (\pm 1) \equiv \pm 2\pmod{5}$, $h \in \mathbb{N}$, Condition $i)$ and $ii)$ in Theorem \ref{MainTheorem} hold as well. Hence $\cU_{\sigma} \subset \F^3_{q^5}$ is scattered.
    \item [-] \textit{Case $q$ odd.} $G_4(\gamma) \not = 0$ for any $\gamma \in \F^*_q$ if and only if $5$ is not a square, namely if and only if $q \equiv 2,3 \pmod 5.$ 
Conditions $i)$,$ii)$ and $iii)$ in Theorem \ref{MainTheorem} are satisfied and so $\cU_\sigma$ is scattered.\\

Summing up we get the following theorem.

\begin{theorem}\label{thm:m5}
Let $q=p^e$ and
\begin{equation*}
\cU_\sigma=\{(x,\xqs+a,\xxs +b):x \in\F_{q^5},a,b\in\fq\} \subset \F_{q^5}^3,
\end{equation*}
where $\sigma: x \in \F_{q^5} \longrightarrow x^{q^s} \in \F_{q^5}$, $s \in \{1,2,3,4\}$. Then,
\begin{itemize}
\item [-] for $p=2$, $\cU_\sigma$ is maximum scattered if $e$ is an odd positive integer.
\item [-] for $p$ odd, $\cU_\sigma$ is maximum scattered if $q\equiv 2,3 \pmod 5$.
\end{itemize}
\end{theorem}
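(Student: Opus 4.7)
The plan is to apply Corollary \ref{gscattered} with $m=5$, using the closed form for $G_{m-1}(\gamma)$ recorded just before the statement:
\[G_4(\gamma)=\frac{\gamma^2-3\gamma+1}{\gamma^2}.\]
Since $\dim_{\F_q}\cU_\sigma=m+2=7=\lfloor 3m/2\rfloor$ equals the upper bound of \cite{BlokhuisLavrauw2000} on the dimension of a scattered $\F_q$-subspace of $\F_{q^5}^3$, once $\cU_\sigma$ is shown to be scattered the maximality follows at once. The task therefore reduces to verifying the three hypotheses of Corollary \ref{gscattered}: $(q-1,5)=1$, $p\neq 5$, and $G_4(\gamma)\neq 0$ for every $\gamma\in\F_q^*$. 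The last condition is equivalent to asking that the quadratic $X^2-3X+1$ have no root in $\F_q$, which I will handle separately according to the parity of $q$.

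For $q$ even ($p=2$) the quadratic becomes $X^2+X+1$, whose splitting field over $\F_2$ is $\F_4$. Hence it has no root in $\F_{2^e}$ if and only if $\F_4\not\subseteq\F_{2^e}$, i.e.\ if and only if $e$ is odd. Under this assumption the two remaining hypotheses come for free: a short check of the cycle of $2^e\pmod 5$ shows that $e$ odd forces $2^e\equiv\pm 2\pmod 5$, so $5\nmid q-1$ and $(q-1,5)=1$, and $p=2\neq 5$ is immediate.

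For $q$ odd the quadratic $X^2-3X+1$ has discriminant $5$, hence has no root in $\F_q$ precisely when $5$ is a non-square in $\F_q$. Since $\sqrt{5}\in\F_{p^e}$ if and only if the minimal polynomial of $\sqrt{5}$ over $\F_p$ (of degree $1$ or $2$) divides $X^{p^e}-X$, the non-squareness of $5$ in $\F_q$ is equivalent to $5$ being a non-square in $\F_p$ \emph{and} $e$ being odd. By quadratic reciprocity (using $5\equiv 1\pmod 4$) the former amounts to $p\equiv 2,3\pmod 5$, and combining these conditions yields exactly $q\equiv 2,3\pmod 5$. From that congruence one reads off $q-1\not\equiv 0\pmod 5$, hence $(q-1,5)=1$, and $p\neq 5$ since $q\not\equiv 0\pmod 5$; the three hypotheses of Corollary \ref{gscattered} therefore hold.

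The only step requiring any real care is the translation between $q\equiv 2,3\pmod 5$ and the non-squareness of $5$ in $\F_q$, because the latter genuinely depends both on the Legendre symbol of $5$ modulo $p$ and on whether $\F_q$ already contains $\F_{p^2}$; once this is correctly unpacked the remainder is routine bookkeeping against Corollary \ref{gscattered}, and no serious obstacle is anticipated.
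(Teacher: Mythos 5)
Your proposal is correct and follows essentially the same route as the paper: both reduce the claim to Corollary \ref{gscattered} (equivalently, Theorem \ref{MainTheorem} together with Proposition \ref{numberroots}) via the closed form $G_4(\gamma)=(\gamma^2-3\gamma+1)/\gamma^2$, check that $X^2-3X+1$ has no root in $\F_q$ separately for $q$ even (splitting field $\F_4$, hence $e$ odd) and $q$ odd (discriminant $5$ a non-square, hence $q\equiv 2,3\pmod 5$), and note that maximality is immediate from the Blokhuis--Lavrauw bound since $m+2=\lfloor 3m/2\rfloor$. Your unpacking of the equivalence between $q\equiv 2,3\pmod 5$ and the non-squareness of $5$ in $\F_q$ is merely a more detailed version of the step the paper states without proof.
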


\end{itemize}

\subsection{Case \texorpdfstring{$m=7$}{}}\label{subsec:Case7}
Now suppose $m=7$. Using (4) 
of \cite[Section 4]{SheekeyMcGuire2019}, it follows that for any $\gamma\in \F^*_q$, 
\begin{equation*}
  G_6(\gamma)=\frac{\gamma^3-5\gamma^2+6\gamma-1}{\gamma^3}
\end{equation*}
We split our discussion depending on $\mathrm{char}(\F_q)=p$:
\begin{itemize}
    \item [-] if $p=2$, $G_6(\gamma)=\frac{\gamma^3+\gamma^2+1}{\gamma^3}$ and since its numerator is an irreducible polynomial over $\mathbb{F}_2$, $G_6(\gamma)\neq 0$ for $q=2^e$ and $3\nmid e$. So, by Proposition \ref{numberroots}, $\cU_\sigma$ is scattered.

    \item [-] if $p$ is odd, 
\begin{itemize}
    \item [$a)$] for $p=3$, for any $\gamma \in \F^*_q$, $G_6(\gamma)=\frac{\gamma^3+\gamma^2+1}{\gamma^3}$ and, as before, since its numerator is an irreducible polynomial over $\F_3$, $G_6(\gamma) \not = 0$ for any $q=3^e$ and $3\nmid e$.
    \item [$b)$] for $p=5$, for any $\gamma \in \F^*_q$, $G_6(\gamma)=\frac{\gamma^3+\gamma+4}{\gamma^3}$ and, as before, since its numerator is an irreducible polynomial over $\F_5$, $G_6(\gamma) \not = 0$ for any $q=5^e$ and $3\nmid e$.
    \item [$c)$] for $p=7$, $G_6(\gamma)=\frac{(\gamma+3)^3}{\gamma^3}$, so by Proposition \ref{numberroots}, Condition $iii)$ in \ref{MainTheorem} is not satisfied.
    \item [$d)$] for $p>7$, applying the transformation $\gamma \mapsto \gamma+\frac{5}{3}$ to the equation $\gamma^3-5\gamma^2+6\gamma-1=0$, one gets 
\begin{equation*}
\gamma^3-\frac{7}{3}\gamma-\frac{7}{27}=0.
\end{equation*}
By \cite[Theorem 3]{Dickson1906}, this is irreducible over $\F_q$ and so  it has no roots in $\fq$ if and only if $\frac{7}{18} \cdot (\frac1{3}+\sqrt{-3})$ is not a cube in $\F_q(\sqrt{-3})$.
\end{itemize}
\end{itemize}
Finally for sake of completeness, we collect all the results of this Subsection in the following theorem.
\begin{theorem}\label{thm:m7}
    Let $q=p^e$ and
\begin{equation*}
\cU_\sigma=\{(x,\xqs+a,\xxs +b):x \in\F_{q^7},a,b\in\fq\} \subset \mathbb{F}^3_{q^7},
\end{equation*}
where $\sigma: x \in \F_{q^7} \longrightarrow x^{q^s} \in \F_{q^7}$, $s \in \{1,\ldots,6\}$. Then,
\begin{itemize}
\item [-] for $p=2,3,5$, $\mathcal{U}_\sigma$ is scattered if $3 \nmid e$.
\item [-] for $p>7$, $\cU_\sigma$ is scattered if $\frac{7}{18} \cdot (\frac1{3}+\sqrt{-3})$ is not a cube in $\F_q(\sqrt{-3})$.
\end{itemize}

\end{theorem}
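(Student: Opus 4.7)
The plan is to apply Corollary \ref{gscattered} with $m=7$, which reduces scatteredness of $\cU_\sigma$ to three arithmetic conditions: $(q-1,7)=1$, $p\neq 7$, and $G_6(\gamma)\neq 0$ for every $\gamma\in\F_q^*$. I would first compute $G_6(\gamma)$ using the recursion (\ref{eq:recursion}): specialized at a fixed scalar $\gamma\in\F_q$ (where $\sigma$ acts trivially), the recursion collapses to the linear recurrence $G_k=-G_{k-1}-\gamma G_{k-2}$ with $G_0=1$, $G_1=-1$. Iterating through $k=6$ (or invoking the table in \cite[Section 4]{SheekeyMcGuire2019}) yields
\[
G_6(\gamma)=\frac{\gamma^3-5\gamma^2+6\gamma-1}{\gamma^3},
\]
so condition $(iii)$ of Theorem \ref{MainTheorem} reduces to showing that the cubic $f(X)=X^3-5X^2+6X-1$ has no root in $\F_q^*$.

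Next I would split by characteristic and reduce $f$ modulo $p$. For $p\in\{2,3,5\}$ the reductions are cubics whose constant term is nonzero and which have no root in $\F_p$ (a finite check); hence they are irreducible over $\F_p$. An irreducible cubic over $\F_p$ acquires a root in $\F_{p^e}=\F_q$ precisely when $3\mid e$, because any such root generates $\F_{p^3}$, which embeds in $\F_q$ if and only if $3\mid e$. This gives the first bullet, modulo checking that the remaining hypotheses $(i)$ and $(ii)$ of Theorem \ref{MainTheorem} are automatic: $p\neq 7$ is trivial, and $(q-1,7)=1$ follows from $3\nmid e$ by computing that $\mathrm{ord}_7(2)=3$ and $\mathrm{ord}_7(3)=\mathrm{ord}_7(5)=6$, so $p^e\equiv 1\pmod 7$ would force $3\mid e$. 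The case $p=7$ is naturally excluded: direct reduction gives $f(X)\equiv (X+3)^3\pmod 7$, which always vanishes at $\gamma=-3$; equivalently, $(ii)$ fails anyway since $7\mid m$.

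For $p>7$ I would perform the Tschirnhaus substitution $\gamma\mapsto \gamma+\tfrac{5}{3}$ to depress $f$ to
\[
g(Y)=Y^3-\tfrac{7}{3}Y-\tfrac{7}{27}.
\]
I would then invoke Dickson's classical criterion \cite[Theorem 3]{Dickson1906} for irreducibility of a depressed cubic over a finite field of characteristic not $2,3$: $g$ is irreducible over $\F_q$ (equivalently, has no root in $\F_q$) if and only if a specific element of $\F_q(\sqrt{-3})$ built from the coefficients is not a cube there. Plugging in the coefficients of $g$ yields precisely $\tfrac{7}{18}(\tfrac{1}{3}+\sqrt{-3})$, which gives the second bullet.

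The chief obstacle is the $p>7$ case: the characteristic-$p$ computation of $G_6$ and the small-characteristic factorisations are routine, but the translation of Dickson's criterion into the precise cube condition on $\tfrac{7}{18}(\tfrac{1}{3}+\sqrt{-3})$ needs careful bookkeeping of the resolvent, the choice of $\sqrt{-3}$, and normalisations, and one must separately verify that the hypothesis $(q-1,7)=1$ of Corollary \ref{gscattered} is consistent with (or needs to be adjoined to) the cube condition for it to serve as the sole obstruction to scatteredness in large characteristic.
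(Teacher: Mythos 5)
Your proposal matches the paper's proof essentially step for step: the paper likewise reads off $G_6(\gamma)=\frac{\gamma^3-5\gamma^2+6\gamma-1}{\gamma^3}$ from \cite[Section 4]{SheekeyMcGuire2019}, checks irreducibility of the reduced numerator over $\F_p$ for $p=2,3,5$ (so that a root in $\F_q$ would force $3\mid e$), notes that the numerator degenerates to $(\gamma+3)^3$ when $p=7$, and for $p>7$ depresses the cubic via $\gamma\mapsto\gamma+\frac{5}{3}$ and invokes \cite[Theorem 3]{Dickson1906} to obtain the cube condition on $\frac{7}{18}\left(\frac{1}{3}+\sqrt{-3}\right)$. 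Your closing caveat about hypotheses $(i)$ and $(ii)$ of Theorem \ref{MainTheorem} is well taken: for $p=2,3,5$ your order computation shows $3\nmid e$ already forces $(q-1,7)=1$, whereas for $p>7$ the paper's statement silently omits the requirement $q\not\equiv 1\pmod 7$, so on that point your version is more careful than the source.
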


\section[Equivalence issue for U]{Equivalence issue for $\cU_{\sigma}$} \label{sec:equivalence}
Let $\mathcal{U}_{\sigma}$ be the $\fq$-subspace defined in \eqref{stsubspace} with $\sigma: x \in \fqm \rightarrow x^{q^s} \in \fqm$ with $ 1 \leq s<m$ and $\gcd(s,m)=1$. Throughout this section, algebraic conditions involving the integer $s$ will be analyzed; for this reason we will denote $\mathcal{U}_\sigma$ by $\mathcal{U}_s$.\\
Given $1\leq s,t<m$, $\gcd(s,m)=\gcd(t,m)=1$, we are going to investigate whether $\cU_s$ and $\cU_t$ are $\mathrm{\Gamma L}(3,q^m)$-equivalent. Note that this amounts to investigate their $\GL(3,q^m)$-equivalence. Note that this reduces to investigate their $\GL(3,q^m)$-equivalence.
Indeed, if $\cU_s$ and $\cU_t$ are $\GaL(3,q^m)$-equivalent, there exists $A\in \GL(3,q^m)$ and $\zeta \in \Aut(\F_{q^m})$ such that
\begin{equation*}
\left (x^\zeta,(x^\sigma)^\zeta +a^\zeta,(x^{\sigma^2})^\zeta+b^\zeta \right )    A=(y,y^\tau+a',y^{\tau^2}+b').
\end{equation*}
and hence
\begin{equation*}
\left (x^\zeta,(x^\zeta)^\sigma +a^\zeta,(x^\zeta)^{\sigma^2}+b^\zeta \right )    A=(y,y^\tau+a',y^{\tau^2}+b').
\end{equation*}
By replacing $\bar{x}=x^\zeta, \bar{a}=a^\zeta$ and $\bar{b}=b^\zeta$, we get that $\cU_s$ is $\GL(2,q^m)$-equivalent to $\cU_t$.
Assume that $\cU_s$ and $\cU_t$ are $\GL(3,q^m)$-equivalent, and let $A=(a_{ij})$ be the invertible matrix such that $\cU_s \cdot A =\{vA:v \in \cU_s\}=\cU_t$. Then, for any $(x,a,b)\in \fqm\times\fq\times\fq$, there exists $(y,a',b')\in\fqm\times\fq\times\fq$ satisfying 

\begin{equation}\label{equivissue}
(x,x^\sigma +a,x^{\sigma^2}+b)    A=(y,y^\tau+a',y^{\tau^2}+b').
\end{equation}
This holds in particular for the vectors corresponding to $(x,0,0)$. In this case we obtain
\begin{equation*}
\begin{split}
&a_{12}x+a_{22}x^{q^s}+a_{32}x^{q^{2s}}-(a^{q^t}_{11}x^{q^t}+a_{21}^{q^t}x^{q^{s+t}}+a_{31}^{q^t} x^{q^{2s+t}})=a' \textnormal{ and }\\
&a_{13}x+a_{23}x^{q^s}+a_{33}x^{q^{2s}}-(a_{11}^{q^{2t}}x^{q^{2t}}+a^{q^{2t}}_{21}x^{q^{s+2t}}+a^{q^{2t}}_{31}x^{q^{2(s+t)}})=b'.
\end{split}
\end{equation*}
Since $a',b'\in\fq$, raising this expressions to the power $q^i$, for any  $i \in \{ 1,\ldots, m-1\}$ leads to
\begin{equation}\label{equivsystem}
\begin{cases}
&a_{12}x+a_{22}x^{q^s}+a_{32}x^{q^{2s}}-(a^{q^t}_{11}x^{q^t}+a_{21}^{q^t}x^{q^{s+t}}+a_{31}^{q^t} x^{q^{2s+t}})\\
&=
a_{12}^{q^i}x^{q^i}+a_{22}^{q^i}x^{q^{s+i}}+a_{32}^{q^i}x^{q^{2s+i}}-(a^{q^{t+i}}_{11}x^{q^{t+i}}+a_{21}^{q^{t+i}}x^{q^{s+t+i}}+a_{31}^{q^{t+i}} x^{q^{2s+t+i}})
\\
&a_{13}x+a_{23}x^{q^s}+a_{33}x^{q^{2s}}-(a_{11}^{q^{2t}}x^{q^{2t}}+a^{q^{2t}}_{21}x^{q^{s+2t}}+a^{q^{2t}}_{31}x^{q^{2(s+t)}})\\&=
a_{13}^{q^i}x^{q^i}+a_{23}^{q^i}x^{q^{s+i}}+a_{33}^{q^i}x^{q^{2s+i}}-(a_{11}^{q^{2t+i}}x^{q^{2t+i}}+a^{q^{2t+i}}_{21}x^{q^{s+2t+i}}+a^{q^{2t+i}}_{31}x^{q^{2(s+t)+i}})\\
& i=1,\ldots,m-1
\end{cases}
\end{equation}
Assume that the elements of the set of $q$-powers $S=\{0,s,2s,t,s+t,2s+t\}$ on the left hand side of the first equation are pairwise distinct modulo $m$ (note that this is possible for $m \geq 7$). Since for any $q$-power in $S$ there exists $i \in \{ 1, \ldots, m-1\}$ such that this $q$-power does not appear in $i+S$ $\pmod m$. This implies $a_{12}=a_{22}=a_{32}=a_{11}=a_{21}=a_{31}=0$, and hence $\cU_s$ and $\cU_t$ are not equivalent.

The integers in $S=\{0,s,2s,t,s+t,2s+t\}$ fail to be pairwise distinct modulo $m$ if and only if $s \equiv t$, $2s \equiv t$, $s+t \equiv 0$ or $2s+t \equiv 0 \pmod m$.
\begin{itemize}
\item [-] Case $s+t \equiv 0 \pmod m $. Up to interchanging $s$ with $t$, we may suppose that $s > m/2$. 
Then $2m > 2s=m+\ell$ for some integer $\ell < m$. Observe that $s-\ell=m-s>0$, $m-\ell=2(m-s)>0$ and
\begin{equation*}
\mathcal{U}_{s}=\{((x^{q^\ell})^{q^{m-\ell}},(x^{q^{\ell}})^{q^{s-\ell}}+a,x^{q^{\ell}}+b) : x \in \fqm ,a,b \in \fq\}.
\end{equation*}
Putting $z=x^{q^\ell}$ then
\begin{equation*}
\mathcal{U}_{s}=\{(z^{q^{2(m-s)}},z^{q^{m-s}} +a,z+b) : z \in \fqm ,a,b  \in \fq\}=\{(y^{q^{2t}}+b,y^{q^t}+a,y) : y \in \fqm ,a,b \in \fq\}
\end{equation*}
with $t=m-s$ and $(m,t)=1$. Clearly, $\mathcal{U}_s$ is $\GL(3,q^m)$-equivalent to the subspace $\mathcal{U}_{t}$.
\item [-] Case $2s\equiv t \pmod m$. The elements of the set of $q$-powers $S=\{0,s,t,s+t,2t\}$ on the left hand side of the first equation are pairwise distinct modulo $m$. Since for any $q$-power in $S$ there exists $i \in \{ 1, \ldots, m-1\}$ such that this $q$-power does not appear in $i+S$ $\pmod m$. Again, by polynomial identity, we get $a_{12}=a_{22}=a_{21}=a_{31}=a_{13}=a_{23}=0$, which since $A\in \GL(3,q^m)$, can not be the case.

\item [-]
Case $2s+t \equiv 0 \pmod m$. Let $s'=2s, t'=t$, then $s'+t' \equiv 0 \pmod m$. By the first case, $\cU_{s'}$ is equivalent to $\cU_{t'}$. Now, if $\cU_{s}$ is equivalent to $\cU_{t}$ then it is equivalent to $\cU_{s'}$ and this is excluded by the previous case.
\end{itemize}
We have then proved the following result.
\begin{theorem}
Let $\mathcal{U}_s$ and $\mathcal{U}_t$ be the subspaces as defined in \eqref{stsubspace}, with  $1 \leq s,t < m$ and $\gcd(s,m)=\gcd(t,m)=1$. They are $\GaL$-equivalent if and only if $t \in \{s,m-s\}$.
\end{theorem}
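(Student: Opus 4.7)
The plan is first to reduce $\GaL$-equivalence to $\GL(3,q^m)$-equivalence, and then to force the congruence $t \equiv \pm s \pmod m$ via a polynomial-identity argument on any matrix $A$ realizing the equivalence.

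\textbf{Reduction and sufficiency.} I would begin by observing that each $\cU_s$ is stable under $\Aut(\F_{q^m})$: the absolute Frobenius $x \mapsto x^{p^j}$ sends a triple $(x, x^\sigma + a, x^{\sigma^2} + b)$ to $(y, y^\sigma + a^{p^j}, y^{\sigma^2} + b^{p^j})$ with $y = x^{p^j}$, and $a^{p^j}, b^{p^j}$ remain in $\F_q$ because $\F_q$ is Frobenius-stable. Consequently every $\GaL$-equivalence may be composed with a field automorphism to produce a $\GL(3,q^m)$-equivalence, reducing the problem to matrix equivalence. For the sufficiency direction, $t = s$ is trivial; for $t = m - s$ I would assume (WLOG) $s > m/2$, set $\ell = 2s - m$, and use the substitution $z = x^{q^\ell}$ together with the coordinate exchange swapping the first and third entries to write $\cU_s$ explicitly in the form defining $\cU_{m-s}$, realizing the equivalence by an explicit matrix in $\GL(3,q^m)$.

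\textbf{Necessity, generic case.} Now suppose $\cU_s \cdot A = \cU_t$ with $A = (a_{ij}) \in \GL(3,q^m)$. I would test on the vectors $(x, 0, 0)$: then $(x, x^\sigma, x^{\sigma^2}) A = (y, y^\tau + a', y^{\tau^2} + b')$ for some $y \in \F_{q^m}$ and $a', b' \in \F_q$. Expressing $a'$ and $b'$ as $\sigma$-linearized polynomials in $x$, one sees that their monomials have $q$-exponents in the set $S = \{0, s, 2s, t, s+t, 2s+t\} \pmod m$. Because $a', b' \in \F_q$ are fixed by Frobenius, raising to $q^i$ and subtracting produces polynomial identities in $x \in \F_{q^m}$. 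When $|S| = 6$ modulo $m$, no two monomials $x^{q^j}$ can cancel across different Frobenius shifts, and each of the six coefficients must vanish; this forces $a_{11} = a_{21} = a_{31} = a_{12} = a_{22} = a_{32} = 0$, contradicting $\det A \neq 0$.

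\textbf{Residual cases.} Hence at least two elements of $S$ collide modulo $m$, leaving the four subcases $s \equiv t$, $s + t \equiv 0$, $2s \equiv t$, and $2s + t \equiv 0$. The first two deliver $t \in \{s, m - s\}$ directly. For $2s \equiv t$, the reduced exponent set $\{0, s, t, s+t, 2t\}$ still has five distinct residues modulo $m$, and repeating the coefficient-vanishing analysis again kills enough entries of $A$ to violate invertibility. For $2s + t \equiv 0$, I would argue by transitivity: setting $s' = 2s$ and $t' = t$ yields $s' + t' \equiv 0$, so by sufficiency $\cU_{2s}$ is equivalent to $\cU_t$; combining with the assumed $\cU_s \cong \cU_t$ would force $\cU_s \cong \cU_{2s}$, which is the already-excluded $2s \equiv t$ case. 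The main obstacle is the careful bookkeeping in the $2s \equiv t$ subcase, where one must confirm that the five surviving exponent classes still line up to pin down the six critical entries of $A$; this is routine once the exponents are correctly indexed, but it is the only step where an actual polynomial identity must be re-extracted rather than appealed to directly.
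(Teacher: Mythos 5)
Your proposal follows essentially the same route as the paper's proof: reduce to $\GL(3,q^m)$-equivalence, test on the vectors $(x,0,0)$, exploit that $a',b'\in\F_q$ to get linearized polynomial identities over the exponent set $S=\{0,s,2s,t,s+t,2s+t\}$, kill two columns of $A$ in the generic case, and handle the four collision cases exactly as the paper does (explicit substitution $z=x^{q^{\ell}}$ for $s+t\equiv 0$, coefficient vanishing for $2s\equiv t$, and transitivity via $s'=2s$ for $2s+t\equiv 0$). The argument is correct and matches the paper's in structure and detail.
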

In particular, if $s=t$, System \eqref{equivsystem} leads to $a_{12}=a_{32}=a_{31}=a_{13}=a_{21}=0$, $a_{22}=a_{11}^{q^s}$ and $a_{33}=a_{11}^{q^{2s}}$. This amounts to saying
\begin{equation*}
   A= \begin{pmatrix}
   \alpha & 0 & 0 \\
   0 & \alpha^{q^s} & 0 \\
   0 & 0 & \alpha^{q^{2s}}
    \end{pmatrix},
\end{equation*}
with $\alpha=a_{11} \in \F^*_{q^m}$. Since this must hold true for $x=b=0$ in System \eqref{equivissue}, it follow that $\alpha \in \fq^*$ .
We have proved the following result
\begin{theorem}
The stabilizer of $\mathcal{U}_\sigma$ in $\GaL(3,q^m)$ is isomorphic to $\fq \rtimes \Aut(\fqm)$.
\end{theorem}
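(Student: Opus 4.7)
The linear part of the stabilizer is essentially handled in the paragraph preceding the statement: specializing System~\eqref{equivsystem} to $s=t$ and comparing coefficients of $q$-powers of $x$ forces all off-diagonal entries of $A$ to vanish and gives $a_{22} = a_{11}^{q^s}$, $a_{33} = a_{11}^{q^{2s}}$; testing System~\eqref{equivissue} on input $(0,a,0)$ pins the scalar $\alpha := a_{11}$ down to $\fq^*$. My task is therefore to extend this description to semilinear maps and assemble the resulting group structure.

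Write a generic element of $\GaL(3,q^m)$ as $\Phi = A \cdot \tau$, with $A \in \GL(3,q^m)$ and $\tau \in \Aut(\fqm)$ acting entrywise. Two observations allow $\Aut(\fqm)$ to enter the stabilizer for free: first, $\sigma$ commutes with every $\tau$, as all are powers of the prime Frobenius; second, $\tau$ preserves $\fq$ setwise, since $\fq$ is the subfield of fixed points of $x \mapsto x^q$ and $\tau$ commutes with $x \mapsto x^q$. Consequently
\[
\tau\bigl((x, x^\sigma + a, x^{\sigma^2} + b)\bigr) = (x^\tau, (x^\tau)^\sigma + a^\tau, (x^\tau)^{\sigma^2} + b^\tau) \in \cU_\sigma,
\]
which shows $\Aut(\fqm) \subseteq \mathrm{Stab}(\cU_\sigma)$. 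Conversely, any $\Phi \in \mathrm{Stab}(\cU_\sigma)$ can be post-composed with the inverse of its semilinear component to reduce to the purely linear case already treated, so its matrix part must be of the diagonal form $D_\alpha := \diag(\alpha, \alpha^{q^s}, \alpha^{q^{2s}})$ with $\alpha \in \fq^*$.

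To identify the abstract group, I would verify that the set $D := \{D_\alpha : \alpha \in \fq^*\}$ is a subgroup isomorphic to the multiplicative group of $\fq$, and that the conjugation identity $\tau D_\alpha \tau^{-1} = D_{\alpha^\tau}$ shows it is normalized by $\Aut(\fqm)$, which acts on $D$ by restriction of the Frobenius. Since $D \cap \Aut(\fqm) = \{\mathrm{id}\}$, the stabilizer splits as the internal semidirect product $D \rtimes \Aut(\fqm)$, matching the stated group structure.

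The only genuine technical obstacle is the coefficient-comparison underlying the linear-part reduction: the $q$-exponents appearing on the left-hand side of System~\eqref{equivsystem} after setting $t=s$, namely $\{0, s, 2s, 3s, 4s\}$, must be pairwise distinct modulo $m$ in order for the polynomial identity argument to force each coefficient individually to zero. Since $\gcd(s,m) = 1$ and $m \geq 5$ is odd, this distinctness holds and the reduction runs verbatim as in the preceding paragraph; no extra case analysis is required.
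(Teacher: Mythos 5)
Your overall route is the paper's own: the paper likewise disposes of the semilinear part up front (the opening of Section~\ref{sec:equivalence} notes that $\GaL(3,q^m)$-equivalence ``amounts to'' $\GL(3,q^m)$-equivalence, precisely because the entrywise field automorphisms visibly fix $\cU_\sigma$), then specializes System~\eqref{equivsystem} to $s=t$ and evaluates \eqref{equivissue} at $x=b=0$ to force $A=\diag(\alpha,\alpha^{q^s},\alpha^{q^{2s}})$ with $\alpha\in\fq^*$. Your semidirect-product bookkeeping ($\tau D_\alpha\tau^{-1}=D_{\alpha^\tau}$, trivial intersection) is a correct elaboration of what the paper leaves implicit, and your reading of ``$\fq$'' in the statement as the multiplicative group is the right one.

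The gap is in your final paragraph, and it bites exactly in the paper's flagship case $m=5$. Pairwise distinctness of $\{0,s,2s,3s,4s\}$ modulo $m$ is not the condition the coefficient-comparison needs: the argument requires, for each exponent $e$ in the support $S$ of a given equation, some $i\in\{1,\dots,m-1\}$ with $e\notin i+S \pmod m$, i.e.\ it requires $S$ to be a \emph{proper} subset of $\mathbb{Z}/m\mathbb{Z}$. For $m=5$ the support of the second equation of \eqref{equivsystem} with $t=s$ is $\{0,s,2s,3s,4s\}$, which is all of $\mathbb{Z}/5\mathbb{Z}$ since $(s,5)=1$; then $i+S=\mathbb{Z}/5\mathbb{Z}$ for every $i$, and indeed a nonzero linearized polynomial with full support can be $\fq$-valued without vanishing (e.g.\ $\mathrm{Tr}_{q^5/q}(cx)$), so the reduction does not run ``verbatim''. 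The repair is short. That $f(x)=\sum_e c_e x^{q^e}$ takes values in $\fq$ is equivalent to $f(x)^q\equiv f(x)\pmod{X^{q^m}-X}$, i.e.\ to the cyclic relations $c_{e+1}=c_e^q$ (indices mod $m$), so if any single coefficient vanishes they all do. The first equation has support inside $\{0,s,2s,3s\}$, proper for every $m\geq 5$, whence all its coefficients vanish, giving $a_{12}=0$, $a_{31}=0$, $a_{22}=a_{11}^{q^s}$ and $a_{32}=a_{21}^{q^s}$ (note: this equation alone gives $a_{32}=a_{21}^{q^s}$, not $a_{32}=0$, since those two exponents merge when $t=s$). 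Feeding $a_{31}=0$ into the second equation kills its $x^{q^{4s}}$-coefficient $-a_{31}^{q^{2s}}$, so by the cyclic relations all of its coefficients vanish as well, yielding $a_{13}=a_{23}=a_{21}=0$ (hence $a_{32}=0$) and $a_{33}=a_{11}^{q^{2s}}$. With this two-step order of deductions your argument is complete for all odd $m\geq 5$.
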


\section[Linear sets associated with the cutting blocking sets U]{Linear sets associated with the cutting blocking sets $\cU_{\sigma}$}
\label{sec:characters}

Let $\PG(\mathbb{F}_{q^m}^k, \fqm)=\PG(k-1,q^m)$. If $\cU$ is an $\fq$-subspace of dimension  $n$, then the set of points 
\begin{equation*}
L_\cU= \{ \langle u \rangle_{\mathbb F_{q^m}}\, \colon \, u \in \cU \setminus \{{\bf 0}\}  \} \subset \PG(k-1,q^m)
\end{equation*}
is called an $\mathbb{F}_q$-\textit{linear set} of $\PG(k-1,q^m)$ of rank $n$.

Let $\Lambda = \PG(T,q^m)$ be a projective subspace of $\PG(k-1,q^m)$ with underlying vector space $T \subseteq \F_{q^m}^k$. Then $L_{\cU \cap T}=L_\cU \cap \Lambda,$ and the \textit{weight} of $\Lambda$ in $L_\cU$ is defined as
\begin{equation*}
w_{L_\cU}
(\Lambda) = \dim_{\mathbb{F}_q}(\cU \cap T). 
\end{equation*}
If $\text{dim}_{\mathbb{F}_q}(\cU \cap T)=i$, one shall say that $\Lambda$ has {\it weight} $i$ in $L_\cU$.

Then $L_{\cU \cap T}=L_\cU \cap \Lambda,$ and the \textit{weight}  $w_{L_\cU}(\Lambda)$ of $\Lambda$ in $L_\cU$ is defined as the weight of $T$ in $\cU$.
If  $w_{L_\cU}(\Lambda)=i$, one shall say that $\Lambda$ has {\it weight} $i$ in $L_\cU$.

\subsection{Characters of $L_{\cU_{\sigma}}$ with respect to the lines of $PG(2,q^m)$}

In \cite[Theorem 4.2]{blokhuis2000scattered}, the authors proved that if $\cU$ is  an $\F_q$-linear scattered subspace of rank $km/2$, $km$ even, in $\F_{q^m}^k$, then the linear set $L_\cU$ has two intersection characters in $\PG(k-1, q^m)$ with respect to the hyperplanes. 
As seen in Section \ref{sec:construction} , the $(m+2)$-dimensional subspace $\cU_\sigma$ of $\F_{q^m}^k$ contains a 2-scattered subspace $\mathcal{W}_\sigma$ of dimension $m$ (cf. \cite[Lemma 2.2]{CMPZ2021-generalising}). We will show that this configuration gives rise to a linear set with a wider range of intersection characters with respect to hyperplanes. 
%More precisely, let $L_\cU\subseteq \PG(k-1,q^m)$ be a linear set such that $\cU$ is a linear cutting blocking set of rank $k+m-1$, $k \geq 3$ and $m\geq (k-1)^2$. If the subspace $\mathcal{U}$ contains  
% an $m$-dimensional $(k-1)$-scattered subspace $\mathcal{W}$, then
%for each hyperplane $\Lambda$ we get
%\begin{equation*}
%k-1\leq  w_{L_\cU}(\Lambda)\leq 2(k-1),
%\end{equation*}
%i.e. $L_\cU$ has at most $k$ characters with respect to the hyperplanes of $\PG(k-1,q^m)$. 
% Indeed, since $\cU$ is a cutting blocking set, each hyperplane $\Lambda=\PG(H,q^m)$ has weight at least $k-1$ and, by Grassmann Formula 
%\begin{equation*}
%\begin{aligned}
% w_{L_\cU}(\Lambda)&=\dim_{\F_q}(H \cap \mathcal{U})=\dim_{\F_q}(  H \cap \mathcal{U} \cap \cW)+\dim_{\F_q}( (H \cap \cU) + \cW) - \dim_{\F_q}\cW \\
%& \leq \dim_{\F_q}(H \cap \cW) + \dim_{\F_q} \cU -\dim_{\F_q} \cW \leq k-1 + (k + m  -1) - m= 2(k-1)
%\end{aligned}
%\end{equation*}
%where the last inequality follows because of $(k-1)$-scatteredness of $\cW$.\\
More precisely, let $L_\cU\subseteq \PG(k-1,q^m)$ be a linear set such that $ \dim_{\F_q}\cU =k+m-1$, $k \geq 3$. If the subspace $\mathcal{U}$ contains  
 an $m$-dimensional $(k-1)$-scattered subspace $\mathcal{W}$, then
for each hyperplane $\Lambda$ we get
\begin{equation}\label{spec.characters}
k-1\leq  w_{L_\cU}(\Lambda)\leq 2(k-1),
\end{equation}
i.e. $L_\cU$ has at most $k$ characters with respect to the hyperplanes of $\PG(k-1,q^m)$. 
The first inequality  \eqref{spec.characters} is an easy consequence of the Grassmann formula. For the second one,
\begin{equation*}
\begin{aligned}
 w_{L_\cU}(\Lambda)&=\dim_{\F_q}(H \cap \mathcal{U})=\dim_{\F_q}(  H \cap \mathcal{U} \cap \cW)+\dim_{\F_q}( (H \cap \cU) + \cW) - \dim_{\F_q}\cW \\
& \leq \dim_{\F_q}(H \cap \cW) + \dim_{\F_q} \cU -\dim_{\F_q} \cW \leq k-1 + (k + m  -1) - m= 2(k-1)
\end{aligned}
\end{equation*}
where the last inequality follows because of $(k-1)$-scatteredness of $\cW$.\\

In order to compute the characters of $L_{\mathcal{U}_\sigma}$ with respect to the lines of $\PG(2,q^m)$, some identities of the admissible characters of intersection with respect to the hyperplanes of any subset of a projective space must be recalled. These identities are known as the \textit{standard equations}, and can be obtained by a simple double counting argument, see e.g. \cite[Lemma 6]{Honold2018}. 
\begin{lemma}\label{standardequationsLemma}
Let $\mathcal{S}$ be a subset of the projective space $\PG(v-1,q)$, with $|\mathcal{S}|=n$. For any $i\in\{0,1,\dots,n\}$. Define $a_i$ as the number of hyperplanes meeting $\mathcal{S}$ in precisely $i$ points. Then the following identies hold:
$$\sum_{i=0}^na_i=\frac{q^v-1}{q-1};$$
$$\sum_{i=1}^nia_i=n\left(\frac{q^{v-1}-1}{q-1}\right);$$
$$\sum_{i=2}^n\binom{i}{2}a_i=\binom{n}{2}\frac{q^{v-2}-1}{q-1}.$$
\end{lemma}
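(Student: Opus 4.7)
The plan is to establish the three identities by elementary double counting, invoking only the standard fact that in $\PG(v-1,q)$ the number of hyperplanes through a fixed projective subspace of dimension $j-1$ equals $\frac{q^{v-j}-1}{q-1}$ (this follows from counting $1$-dimensional vector subspaces in the quotient $\F_q^v/\F_q^j$, where $\F_q^j$ is the vector space underlying the given projective subspace).

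The first identity is immediate: the collections $\{H : |H \cap \mathcal{S}| = i\}$ as $i$ ranges from $0$ to $n$ partition the set of all hyperplanes of $\PG(v-1,q)$, whose cardinality is $\frac{q^v-1}{q-1}$ (apply the preliminary fact with $j=0$, or equivalently count $1$-dimensional subspaces of $\F_q^v$ and use point--hyperplane duality).

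For the second identity, I would count the set of incident pairs $\{(P,H) : P \in \mathcal{S},\ H \text{ a hyperplane},\ P \in H\}$ in two ways. Grouping the pairs by hyperplane, each $H$ with $|H\cap\mathcal{S}|=i$ contributes $i$ pairs, giving the total $\sum_{i=1}^n i\, a_i$. Grouping the pairs by point, each $P \in \mathcal{S}$ lies in exactly $\frac{q^{v-1}-1}{q-1}$ hyperplanes (preliminary fact with $j=1$), yielding the total $n \cdot \frac{q^{v-1}-1}{q-1}$. Equating the two counts gives the claimed identity.

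For the third identity, I would apply the same idea to the set of triples $\{(\{P_1,P_2\},H) : P_1, P_2 \in \mathcal{S},\ P_1 \neq P_2,\ \{P_1,P_2\} \subseteq H\}$. Grouping by hyperplane, each $H$ with $|H\cap\mathcal{S}|=i$ contributes $\binom{i}{2}$ triples, producing the total $\sum_{i=2}^n \binom{i}{2} a_i$. Grouping by pair, the two distinct projective points $P_1 \neq P_2$ span a unique projective line, and the preliminary fact with $j=2$ gives $\frac{q^{v-2}-1}{q-1}$ as the number of hyperplanes containing that line; since there are $\binom{n}{2}$ unordered pairs, the total is $\binom{n}{2}\cdot \frac{q^{v-2}-1}{q-1}$. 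There is no genuine obstacle in the argument: the only point requiring bookkeeping care is the correspondence between projective dimension $j-1$ and vector dimension $j$ when invoking the hyperplane-count formula, but this is routine.
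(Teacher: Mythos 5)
Your proof is correct and follows exactly the route the paper indicates: the paper cites \cite[Lemma 6]{Honold2018} without giving a proof, remarking only that the identities ``can be obtained by a simple double counting argument,'' which is precisely the argument you carry out (counting hyperplanes, incident point--hyperplane pairs, and pairs of points together with the hyperplanes through the line they span). The bookkeeping between projective and vector dimensions in your hyperplane counts is handled correctly, so nothing further is needed.
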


\medskip

%\begin{theorem} \label{th:characters}
%Let $L_\cU\subseteq \PG(2,q^m)$ be a linear set such that $\cU$ is a linear cutting blocking set of rank $m+2$, $m\geq 5$. If $\mathcal{U}$ contains  
% an  $m$-dimensional $2$-scattered subspace $\mathcal{W}$, then the linear set $L_\cU$ has exactly three characters with respect to the lines.\end{theorem}

\begin{theorem} \label{th:characters}
Let $L_\cU\subseteq \PG(2,q^m)$ be a linear set such that $\cU$ is has rank $m+2$, $m\geq 5$. If $\mathcal{U}$ contains an $m$-dimensional $2$-scattered subspace $\mathcal{W}$, then the linear set $L_\cU$ has exactly three characters with respect to the lines. More precisely, there are exactly $A_2$, $A_3$ and $A_4$ lines whose intersection with $L_\cU$ has size $q+1$, $q^2+q+1$ and $q^3+q^2+q+1$ respectively, where $A_2$, $A_3$ and $A_4$ are as in \eqref{A2}, \eqref{A3} and \eqref{A4}.
\end{theorem}

\begin{proof}
Let  $A_i$ be the number of lines of $\PG(2,q^m)$ with weight $i$, $i \in \{ 0,1, \ldots, m+2\}$. Since 
$ 2 \leq w_{L_\cU}(\ell) \leq 4$ for any line $\ell$ of $\PG(2,q^m)$, one has
that  $A_i=0$ for any $i \not = 2,3,4$. We show that $A_2,A_3,A_4\ne 0$. By
Lemma \ref{standardequationsLemma}, since  $A_i=a_{\frac{q^i-1}{q-1}}$ and $|L_\cU|=\frac{q^{m+2}-1}{q-1}$, one gets
\begin{equation}\label{linearsystem}
\begin{cases}
&A_{2}+A_{3}+A_{4}=q^{2m}+q^m+1 \\
&(q^{2}-1)A_{2}+(q^{3}-1)A_{3}+(q^{4}-1)A_{4}=(q^{m+2}-1)(q^m+1) \\
&(q^{2}-1)(q-1)A_{2}+(q^{3}-1)(q^{2}-1)A_{3}+(q^{4}-1)(q^{3}-1)A_{4}=(q^{m+2}-1)(q^{m+1}-1)
\end{cases}
\end{equation}
where in the last equation we used the following identity

\begin{equation*}
\binom{\frac{q^{\ell}-1}{q-1}}{2}=\frac{q}{2}\frac{q^\ell-1}{q-1}\frac{q^{\ell-1}-1}{q-1},   \quad \quad \quad \ell \geq 2.
\end{equation*}
Then \eqref{linearsystem} is a linear system of three equations in the unknowns $A_{2},A_{3},A_{4}$ with associated matrix
\begin{equation*}
\begin{pmatrix}
1 & 1 & 1 \\
(q^{2}-1) & (q^{3}-1)& (q^{4}-1)\\
(q^{2}-1)(q-1) & (q^{3}-1)(q^{2}-1) & (q^{4}-1)(q^{3}-1)
\end{pmatrix}
\end{equation*}
whose determinant is $ q^{6} (q+1)(q-1)^3$, clearly not zero for $q \geq 2$.
So, solving this linear system, one gets:
\begin{equation}\label{A2}
\begin{aligned}
    A_{2}&=\frac{1}{q^{4}(q+1)(q-1)^2}(q^7 + q^{2 m} - q^{1 + m}- q^{5 + m}- q^{6 + m} + q^{
   7 + m} + q^{2 + 2 m} - q^{5 + 2 m} - q^{6 + 2 m} + q^{7 + 2 m})\\
   &=\frac{1}{q^{4}(q+1)(q-1)^2}(q^7+ q^{m+1}(q^{6}-q^5-q^4-1) +  q^{2 m}(q^{7}-q^6-q^5+q^2+1))>0
    \end{aligned}
\end{equation}
\begin{equation}\label{A3}
\begin{aligned}
  A_{3}=&\frac{1}{(q-1)^2q^4} (q^{m-1}-1)(q^5 - q^m - q^{3 + m} + q^{4 + m})=\\
  &\frac{1}{(q-1)^2q^4} (q^{m-1}-1)(q^5 + q^m(q^4-q^3-1))>0
  \end{aligned}
\end{equation}
\begin{equation}\label{A4}
      A_{4}=\frac{1}{(q+1)(q-1)^2}(q^{m-1}-1) (q^{m-4}-1)>0
\end{equation}
and, hence there are exactly three characters of $L_{\cU}$ with respect to the lines.
\end{proof}
%As a consequence of Theorem \ref{th:correspondence} and of Theorem \ref{th:characters} we may state the following result.

%\begin{corollary}
%Let $\mathcal{U}$ be a linear cutting blocking set of rank $m+2$, $m \geq 5$ and suppose that $\mathcal{U}$ contains an $m$-dimensional $2$-scattered subspace $\mathcal{W}$. Then the $[m+2,3]_{q^m/q}$  code $\cC=\Psi(\mathcal{U})$ has minimum distance $d=m-2$ and has codewords with rank weight belonging to $\{m-2,m-1,m\}$.
%\end{corollary}

\subsection{\bf Some consequences of previous results in terms of minimal codes}
As a byproduct of Theorem \ref{th:correspondence} and of Theorem \ref{th:characters} we obtain the exact weight distribution of codes corresponding to $\mathcal{U}$.
\begin{theorem}
Let $\mathcal{U}$ be a cutting blocking set of rank $m+2$, $m \geq 5$ odd and suppose that $\mathcal{U}$ contains an $m$-dimensional $2$-scattered subspace $\mathcal{W}$. Then the $[m+2,3]_{q^m/q}$  code $\cC\in\Psi(\left[  \mathcal{U}\right] )$ has minimum distance $d=m-2$. 
More precisely, there are precisely $A_2$, $A_3$ and $A_4$ codewords of rank weight $m$, $m-1$ and $m-2$ respectively, where $A_2$, $A_3$ and $A_4$ are as in \eqref{A2}, \eqref{A3} and \eqref{A4}.
\end{theorem}
\noindent Furthermore, the codes obtained from the scattered subspaces $\cU_\sigma$ have generator matrix of the shape
\begin{equation}\label{generator-matrix}
\begin{pmatrix}
 a_1 & a_2 & \ldots & a_m & 0 & 0 \\
  a^{q^s}_1 & a^{q^s}_2 & \ldots & a^{q^s}_m & 1 & 0 \\
  a_1^{q^{2s}} & a^{q^{2s}}_2 & \ldots & a_m^{q^{ms}} & 0 & 1
\end{pmatrix},
\end{equation}
where $a_1,\ldots, a_m$ is an $\F_q$-basis of $\F_{q^m}$, $1 \leq s \leq n-1$ and $\gcd(s,n)=1$. By checking the Singleton-like bound for rank metric codes in \eqref{singleton-bound}, one can verify that $d=m-2$ is the maximum possible value for the minimum rank of a $[m + 2, 3]_{q^m/q}$ rank metric code for every $m \geq 3$. The scattered subspace $\mathcal{U}_\sigma$ gives a rank metric code whose second generalized rank weight is $d_2=m + 1$  and this is the largest possible value, see for instance \cite[Theorem 3.3]{MarinoNeriTrombetti2023},  \cite[Proposition 2.5]{BMN}. Observe that a code with generator matrix obtaining by deleting the last two columns from \eqref{generator-matrix} is an $[m, 3,m - 2]_{q^m/q}$ Gabidulin code. We also point out that the shape of the matrix \eqref{generator-matrix}
reminds the one of triply extended Reed-Solomon codes over finite fields of even size which provide constructions for MDS codes, see \cite[Chapter 11]{bok:MW}.

\subsection{\bf Some consequences of previous results in terms of rank saturating systems}
The concept of {\it rank saturating system} is introduced for a rank-metric codes $\cC$ with given {\it rank covering radius} $\rho_{\rk}(\cC)$, see \cite[Definition 2.4]{bonini2022saturating}.  An $[n, k]_{q^m/q}$ system $\cU$ is \textit{rank} $\rho$-\textit{saturating} if the linear set $L_{\cU}$ is a $(\rho - 1)$-saturating set in $\PG(k - 1, q^m)$, i.e. if for any point $Q \in \PG(k-1,q^m)$, there exist $\rho$ points, say
$P_1,\ldots, P_{\rho} \in \mathcal{L}_\cU$ such that $Q \in \langle P_1,\ldots, P_{\rho} \rangle$ and $\rho$ is the smallest value with this property, \cite[Definition 3.2]{bonini2022saturating}.
As shown in \cite[Theorem 2.5]{bonini2022saturating}, an $[n, k]_{q^m/q}$ system  $\cU$ associated to a code $\cC$ is rank $\rho$-saturating if and only if $\rho_{\rk}(\cC^\perp)= \rho$.
Moreover, if  $\cU$ is an $[n,k]_{q^m/q}$ linear cutting blocking set, $\cU$ is
a rank $(k - 1)$-saturating $[n, k]_{
q^{m(k-1)}/q}$ system, see \cite[Theorem 4.6]{bonini2022saturating}.
Therefore, by our construction, for the couples $(m,q)$ such that $\cU_\sigma$ is a linear cutting blocking set, we have a family of rank $2$-saturating $[m+2,3]_{q^{2m}/q}$ systems with the property that the dual of the codes in $\Psi(\left[ \mathcal{U}_{\sigma}\right] )$ have rank covering radius $2$. 
So, by \cite[Corollary 4.7]{bonini2022saturating}, using the same arguments of \cite[Corollary 4.10]{bonini2022saturating}, one gets that, if $\cU_\sigma \leq \F^k_{q^m}$ is scattered, the minimal dimension $s_{q^{2m}/q}(k, \rho)$  of a rank $\rho$-saturating system in $\F^k_{q^m}$ is $s_{q^{2m}/q}(3,2)=m+2$.

\section{Conclusions} \label{sec:further results}
In this article we exhibit new examples of linear non-degenerate minimal $[m+2,3,m-2]_{q^m/q}$ rank-metric codes with $m\geq 5$ odd for infinite values of $q,$ providing conditions on $m$, such that their dual codes have rank covering radius equal to $2$. 

The keystone to achieve all this is the construction of a family of $(m+2)$-dimensional scattered linear sets of the projective plane $\PG(2,q^m)$, whose underlying vector spaces turn out to be linear cutting blocking sets of $\mathbb{F}^3_{q^m}$, and then elaborating on the link between these objects and rank-metric codes. 

Further computational research suggests the existence of $[m+2,3]_{q^m/q}$ minimal rank codes of this type also for other values of $m$ and $q$, for instance for $m=5$ and $q=4$. For this reason we think it would be interesting to push forward in the study of this family. Also it would be of a certain interest to look for other examples of maximum scattered linear subspaces in $\mathbb{F}_{q^m}^r$ when $rm$ is odd and $(r,m) \notin \{(3,3),(3,5)\}$. 

%A parameter analysis shows the codes are surely new except for the case when $m=5$ and $q=p^{15h+s}$ with $p=2,3$ and $\gcd(s,15)=1$. In such cases, indeed, they have same parameters of the examples ensuing from the construction exhibited in \cite[Theorem 5.1]{BartoliCsajbokMarinoTrombetti2021}.
%{\color{red} Il revisore suggerisce di studiare noi stessi la questione sulla novita' della costruzione per questi valori dei parametri in questo lavoro o rimuovere la frase. Propongo di toglierla e basta.\\
%We conclude by pointing out that it might be an interesting point for future research to study whether or not these examples fall within the described infinite family.
%}

\section*{Acknowledgments}
The authors express their gratitude to the anonymous referees for their meticulous review.\\
This work was supported by the Italian National Group for Algebraic and Geometric Structures and their Applications (GNSAGA-- INdAM). The first author acknowledges the support by the Irish Research Council, grant n. GOIPD/2022/307. 

\bibliographystyle{abbrv}

\bibliography{scatteredplanebiblio}

\newpage

\noindent Stefano Lia\\
School of Mathematics and Statistics\\ University College Dublin\\Dublin\\
\texttt{ stefano.lia@ucd.ie}

\bigskip

\noindent Giovanni Longobardi, Giuseppe Marino, Rocco Trombetti\\
Department of Mathematics and its Applications ``Renato Caccioppoli",\\
University of Naples Federico II,\\
Via Cintia, Monte S.Angelo I-80126 Napoli, Italy\\
\texttt{\{giovanni.longobardi, giuseppe.marino, rtrombet\}@unina.it}

\end{document}